\documentclass[notitlepage,a4paper,superscriptaddress,10pt,tightenlines,twocolumn,pra,nofootinbib]{revtex4-2}
\usepackage[utf8]{inputenc}
\usepackage{graphicx}
\usepackage{physics, amsmath, amssymb, amsthm, units, dsfont, bm}
\usepackage{orcidlink}
\usepackage{mathtools, amsfonts, mathrsfs, bbm}
\usepackage{tikz}

\usepackage[dvipsnames]{xcolor}
\usepackage{adjustbox}

\usepackage{tikz}
\usepackage{tikzsymbols}
\usetikzlibrary{math,calc}
\usetikzlibrary{overlay-beamer-styles}
\usetikzlibrary{arrows.meta}
\usetikzlibrary{automata, positioning, arrows}
\usetikzlibrary{backgrounds}
\usetikzlibrary{decorations.markings}

\colorlet{alice}{Green}
\colorlet{bob}{Blue}
\colorlet{charlie}{Red}

\newcommand{\kommentar}[1]{}

\newcommand{\id}{\mathds{1}}

\renewcommand{\tr}{\mathrm{tr}}

\renewcommand{\norm}[1]{\left\lVert#1\right\rVert}

\renewcommand{\ket}[1]{| #1 \rangle}
\renewcommand{\bra}[1]{\langle #1 |}
\renewcommand{\ketbra}[2]{| #1 \rangle \! \langle #2 |}
\renewcommand{\braket}[2]{\langle #1 | #2 \rangle}

\renewcommand{\varrho}{\rho}

\newtheorem{theorem}{Theorem}

\newtheorem{observation}[theorem]{Observation}

\makeatletter
\def\maketitle{
\@author@finish
\title@column\titleblock@produce
\suppressfloats[t]}
\makeatother

\begin{document}


\title{Complete Hierarchies for the Geometric Measure of Entanglement}

\author{Lisa T. Weinbrenner$^{\orcidlink{0009-0001-0598-0779}}$
}
\affiliation{Naturwissenschaftlich-Technische Fakult\"{a}t, Universit\"{a}t Siegen, Walter-Flex-Stra\ss e 3, 57068 Siegen, Germany}

\author{
Albert Rico$^{\orcidlink{0000-0001-8211-499X}}$
}
\affiliation{Naturwissenschaftlich-Technische Fakult\"{a}t, Universit\"{a}t Siegen, Walter-Flex-Stra\ss e 3, 57068 Siegen, Germany}
\affiliation{GIQ - Quantum Information Group, Department of Physics, Autonomous University of Barcelona, Spain}

\author{Kenneth Goodenough$^{\orcidlink{0000-0002-1761-0038}}$
}
\affiliation{Naturwissenschaftlich-Technische Fakult\"{a}t, Universit\"{a}t Siegen, Walter-Flex-Stra\ss e 3, 57068 Siegen, Germany}

\author{
Xiao-Dong Yu$^{\orcidlink{0000-0001-8835-5524}}$
}
\affiliation{Department of Physics, Shandong University, Jinan 250100, China}

\author{Otfried Gühne$^{\orcidlink{0000-0002-6033-0867}}$}
\affiliation{Naturwissenschaftlich-Technische Fakult\"{a}t, Universit\"{a}t Siegen, Walter-Flex-Stra\ss e 3, 57068 Siegen, Germany}

\date{\today}

\begin{abstract}
In quantum physics, multiparticle systems are described by 
quantum states acting on tensor products of Hilbert spaces. 
This product structure leads to the distinction between product 
states and entangled states; moreover, one can quantify entanglement 
by considering the distance of a quantum state to the set of 
product states. The underlying optimization problem occurs
frequently in physics and beyond, for instance in the 
computation of the injective tensor norm in multilinear algebra.
Here, we introduce a method to determine 
the maximal overlap of a pure multiparticle quantum state with 
product states based on considering several copies of the 
pure state. This leads to three types of hierarchical
approximations to the problem, all of which we prove to converge
to the actual value. Besides allowing for the computation of the 
geometric measure of entanglement, our results can be used 
to tackle optimizations over stochastic local transformations, to find 
entanglement witnesses for weakly entangled bipartite states, and to
design strong separability tests for mixed multiparticle states. Finally, our approach sheds  light on the complexity of separability tests.
\end{abstract}

\maketitle

\section{Introduction}
Entanglement is one of the most studied resources in quantum information theory, being useful for quantum metrology, quantum cryptography or quantum communication~\cite{horodecki2009quantum, guhne2009entanglement}. As such, there are numerous ways to characterize and quantify entanglement from different viewpoints. Especially for bipartite pure states the problem of quantifying entanglement is well understood. However, for most practical applications multiparticle states are needed, calling for a better understanding of 
multiparticle entanglement quantification.

One possible measure of entanglement for quantum states 
is given by the geometric measure of entanglement, which quantifies the entanglement of a state by its geometric distance to separable states~\cite{shimony1995degree, barnum2001monotones, wei2003geometric, weinbrenner2025quantifying}. For a pure three-particle state $\ket{\psi}$ it is defined as $E_G(\psi) = 1-\Lambda^2(\psi)$, where 
\begin{align}
    \Lambda^2(\psi) = \max_{\ket{abc}} |\braket{abc}{\psi}|^2 
    \label{eq:defGeometricMeasure}
\end{align}
denotes the maximal overlap between the three-partite $\ket{\psi}$ and all product states $\ket{abc}$. 
Interestingly, this measure is directly related to different notions considered in mathematics like the injective tensor norm 
and tensor eigenvalues~\cite{friedland2025tensors, bruzda2024rank}. For bipartite pure states the measure can be directly evaluated by the Schmidt decomposition; moreover, there 
are results 
for 
special families of multiparticle states \cite{weinbrenner2025quantifying}.
However, the calculation of the geometric measure is known to be computationally difficult for general multipartite states~\cite{harrow2013testing}, implying a similar statement for the injective tensor norm.

There are several examples of hard optimization problems 
which can be tackled with so-called hierarchies, forming 
an infinite sequence of approximations indexed by the level $k$. Larger $k$ yield tighter bounds, at the cost of increased computational complexity. A hierarchy is complete if, informally speaking, the approximations converge to the actual solution as $k$ tends to infinity. Some notable examples in quantum physics include hierarchies for the separability problem \cite{Doherty2004complete}, the marginal problem \cite{yu2021complete}, 
and the question whether or not observed correlations can originate from quantum
mechanics \cite{navascues2008convergent}, and in mathematics the optimization of multivariate polynomials \cite{lasserre2001global}.

In this paper we describe three different hierarchies for the geometric measure of pure states which are provably complete. All three hierarchies rely on a multi-copy approach, where given operators act on multiple copies of the target state, leading to increasingly better bounds with the number of copies considered. We illustrate numerically the convergence behavior of the hierarchies for some three- and five-qubit states. Furthermore, we show how the results can be adapted for witness constructions and checking distillability, and use our approach
to characterize entanglement in mixed states. Interestingly, one of the hierarchies is closely related to so-called product tests for entanglement \cite{mintert2005concurrence,beausoleil2006test,harrow2013testing,foulds2021controlled}, leading to a natural improvement of this test to more copies.
From a mathematical perspective, our work answers a question posed by Harald~A.~Helfgott \cite{helfgott}, when looking to generalize a well known 
upper bound on the injective norm  (or the largest eigenvalue) of real 
symmetric matrices to general real symmetric tensors. This was taken up
by Shmuel Friedland by introducing hierarchies for symmetric tensors \cite{friedland2021upperboundsspectralnorm}. Our formulations are an extension of these in a different language without requiring symmetry assumptions; moreover, they are complete.

\section{The Hierarchies}
Let us start by describing our first hierarchy $\mathfrak{H}_1$ 
in the basic setting of three particles of the same dimension; the same approach generalizes in a straightforward manner to any number of particles and dimensions. For that, note that 
$\Lambda^2$ can be written on a two-copy
space as
\begin{align}\label{eq:genidea}
    \Lambda^2(\psi) &= \max_{\ket{abc}} |\bra{abc}^{\otimes 2}\ket{\psi}^{\otimes 2}| \notag \\ 
    &= \max_{\ket{abc}} |\bra{abc}^{\otimes 2} \Pi_2\! \otimes \Pi_2\! \otimes \Pi_2 \ket{\psi}^{\otimes 2}|.
\end{align}
Clearly, the symmetry of Alice's state allows her to project her state on the symmetric subspace of two particles without changing the value of the above scalar product, i.e., she may apply the projection onto the symmetric space $\Pi_2=(\openone+V)/2$
with the SWAP operator $V$ acting as $V\ket{\psi}\ket{\phi}=\ket{\phi}\ket{\psi}$.
The main point is now that while $\Pi_2$ acts trivially on the two copies of $\ket{a}$, it acts nontrivially on the two copies of $\ket{\psi}$. The maximum over product states can therefore be relaxed to the norm of the resulting vector,
\begin{equation}
    \Lambda^2 \leq \Vert \ket{F_2} \Vert
    \mbox{ with }
    \ket{F_2} = \Pi_2^{\otimes 3} [\ket{\psi}^{\otimes 2}],
    \label{eq:genidea2}
\end{equation}
where we have used $\Pi_2^{\otimes 3}$ to denote $\Pi_2\otimes \Pi_2\otimes \Pi_2$ for simplicity. Note that in general the dimensions of the three subsystems do not need to be identical.
This approach generalizes naturally to more copies, where the application of the symmetric projector $\Pi_k$ on $k$ copies of $\ket{\psi}$ leads to the upper bound 
$\Lambda^2 \leq \Vert \ket{F_k} \Vert^{2/k}$.

The relevant question is whether these upper bounds converge
to the correct value of the injective tensor norm $\Lambda^2.$
In fact, we show in Appendix~\ref{app:proof_convergence_first_hierarchy} that the hierarchy is complete from both sides; it delivers converging upper {\it and} lower bounds. We can directly formulate our first main result:

\noindent
\begin{observation}\label{obs:convergence}
For any $k$, we have the two-sided bound
\begin{equation}
    d_k \Vert \ket{F_k} \Vert^{2/k}
    \leq 
    \Lambda^2(\psi)
    \leq
    \Vert \ket{F_k} \Vert^{2/k},
    \label{eq-observation1}
\end{equation}
where the coefficients $d_k<1$ are explicitly known and converge
to one, $\lim_{k\rightarrow\infty} d_k =1$.
\end{observation}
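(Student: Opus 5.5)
The upper bound comes from restricting to symmetric subspaces and applying Cauchy--Schwarz. The lower bound comes from writing each symmetric projector as an integral over product states, which turns $\Vert \ket{F_k}\Vert^2$ into an average of $|\braket{abc}{\psi}|^{2k}$. Throughout, $\Pi_k$ is the projector onto the symmetric subspace of $k$ copies of $\mathbb{C}^d$. Its dimension is $D_k=\binom{k+d-1}{d-1}$, and $\ket{F_k}=\Pi_k^{\otimes 3}\ket{\psi}^{\otimes k}$.

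\emph{Upper bound.} As in Eq.~(\ref{eq:genidea}), for any product state $\ket{abc}$ we have $\ket{a}^{\otimes k}=\Pi_k\ket{a}^{\otimes k}$, and likewise for $b$ and $c$. Since $\Pi_k^{\otimes 3}$ is Hermitian, this gives
\begin{equation}
|\braket{abc}{\psi}|^{k}=|\bra{abc}^{\otimes k}\Pi_k^{\otimes 3}\ket{\psi}^{\otimes k}|=|\bra{abc}^{\otimes k}\ket{F_k}|\leq \Vert \ket{F_k}\Vert ,
\end{equation}
where the last step is Cauchy--Schwarz with a normalized $\ket{abc}^{\otimes k}$. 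Maximizing over $\ket{abc}$ and taking the $2/k$-th power yields $\Lambda^2\leq \Vert\ket{F_k}\Vert^{2/k}$.

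\emph{Lower bound.} I would use the standard identity
\begin{equation}
\Pi_k = D_k\int \ketbra{a}{a}^{\otimes k}\, da ,
\end{equation}
with $da$ the unitarily invariant probability measure on pure states of $\mathbb{C}^d$. It follows from Schur's lemma: the integral commutes with $U^{\otimes k}$, is supported on the irreducible symmetric subspace, and has trace one. Since $\Pi_k^{\otimes 3}$ is a projector,
\begin{equation}
\Vert\ket{F_k}\Vert^2=\bra{\psi}^{\otimes k}\Pi_k^{\otimes 3}\ket{\psi}^{\otimes k}
= D_k^3\int\!\!\int\!\!\int |\braket{abc}{\psi}|^{2k}\, da\, db\, dc \leq D_k^3\,\Lambda^{2k}.
\end{equation}
Hence $\Vert \ket{F_k}\Vert^{2/k}\leq D_k^{3/k}\Lambda^2$, and we may take $d_k=D_k^{-3/k}$. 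For unequal local dimensions $d_1,d_2,d_3$ the same argument gives $d_k=\prod_i D_k(d_i)^{-1/k}$, and for $n$ parties the product runs over all $n$ factors.

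\emph{Convergence.} We have $D_k>1$ for $k\geq1$ and $d\geq 2$, so $d_k<1$. Moreover $D_k\leq (k+1)^{d-1}$, so $\log D_k^{3/k}\leq 3(d-1)\log(k+1)/k\to 0$, which gives $d_k\to1$. The only nontrivial ingredient is the integral representation of $\Pi_k$. Once it is in place, the inequality $\int|\braket{abc}{\psi}|^{2k}\leq \Lambda^{2k}$ is immediate, so I expect no real obstacle beyond justifying that representation carefully, including the normalization $\tr\Pi_k=D_k$.
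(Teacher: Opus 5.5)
Your proposal is correct and follows the paper's proof essentially step by step. The upper bound comes from inserting $\Pi_k^{\otimes n}$ and relaxing to the norm, and the lower bound comes from the representation $\Pi_k=\binom{k+d-1}{k}\int \ket{\phi}\bra{\phi}^{\otimes k}d\phi$, which gives the same constant $d_k=\binom{k+d-1}{k}^{-n/k}$ (with $n=3$ in your write-up); your convergence step via $D_k\leq(k+1)^{d-1}$ is a cleaner, fully rigorous version of the paper's polynomial-growth argument.
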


In the proof of this observation (given in
Appendix~\ref{app:proof_convergence_first_hierarchy})
one expresses the symmetric projectors $\Pi_k$ by integrals over 
symmetric pure product states, which allows us to estimate the 
norm $\Vert \ket{F_k} \Vert$ by powers of $\Lambda$ up to some 
constants incorporated in $d_k$. This points at a physical 
interpretation coming from the quantum de Finetti theorem~\cite{renner2007symmetry, caves2002unknown,christandl2007one},
which roughly speaking states that a symmetric $k$-particle quantum state $\varrho$ cannot be far away from a product state, that is $\varrho \approx \sigma^{\otimes k}$. 
An important point for our results later is that Eq.~\eqref{eq-observation1} 
delivers converging upper and lower bounds in terms of analytical multi-copy expressions. General lower bounds can be obtained with various methods \cite{weinbrenner2025quantifying}, and discrete upper and lower bounds
can be obtained by a brute-force optimization over discretizations of the product
states \cite{friedland2025tensors}.

The attentive reader will have realized that the estimate used in the transition from Eq.~(\ref{eq:genidea}) to Eq.~(\ref{eq:genidea2}) is not optimal: The maximization over 
six-fold product states on the left-hand side of the scalar product was replaced by a maximization over general states, neglecting any product structure. A more careful estimate 
may keep at least the product structure for a split of the six particles into two groups. Indeed, for such bipartitions, the maximal overlap with $\ket{F_2}$ can directly 
be computed by considering the Schmidt decomposition of $\ket{F_2}$ 
\cite{wei2003geometric, bourennane2004experimental}. This gives, with small extra 
effort, an improved bound on $\Lambda^2$ for a fixed level of a hierarchy, see also Appendix~\ref{subsec:Herarchy1}.


For formulating the second hierarchy $\mathfrak{H}_2$ one uses a specific property of the maximization in the definition of $\Lambda^2$. Namely, if 
two vectors (say, $\ket{b}$ and $\ket{c}$) are fixed, then 
$\braket{bc}{\psi} = \ket{\alpha}$ is an unnormalized pure state and the optimal 
$\ket{a}$ needs to be proportional to $\ket{\alpha}$. This implies that
taking the optimal $\ket{b_o c_o}$ enforces 
$\ket{\alpha} = \Lambda \ket{a_o}$ (up to some phase), leading to
$
\braket{\psi}{b_o c_o}\braket{\psi}{a_o c_o}\braket{\psi}{a_o b_o} 
= \Lambda^3(\psi) \bra{a_o b_o c_o}
$
and therefore 
\begin{align}\label{eq:2nd_hierarchy_example}
    \Lambda^4(\psi) = \max_{\ket{abc}} |\braket{\psi}{b c}\braket{\psi}{a c}\braket{\psi}{a b}\ket{\psi}|.
\end{align}
Similar to 
Eq.~\eqref{eq:genidea} one can insert now projectors onto symmetric 
spaces on the two copies of $\ket{a}$ and similarly for the two
copies of $\ket{b}$ and $\ket{c}$. Then, one relaxes the optimization
by taking the norm 
of the resulting six-particle vector $\ket{G_4}.$ This kind of estimate can be 
generalized to more copies of $\ket{\psi}$ in different ways, since there are 
several ways to wire the indices of the tensors corresponding to the states 
$\psi$, see Fig.~\ref{fig:second_hierarchy_wirings} for examples. All possible tree graphs (that is, connected graphs without cycles) lead to a valid 
multi-copy tensor which gives an upper bound on $\Lambda$. We found that 
in practice path graphs of length $k$ worked well, and will thus use 
those for the $k$'th level in the hierarchy, see Appendix~\ref{app:trees} for 
more information. The bounds from this approach are tighter than the 
ones from the first hierarchy since, for a given number of copies of $\ket{\psi}$, the optimization over fewer external indices needs to be relaxed.
Analogous completeness results
as Observation 1 hold here as well; we refer to Appendix~\ref{app:proof_convergence_second_hierarchy} 
for a detailed discussion. Furthermore, in Appendix~\ref{app:connection_helfgott_friedland}
we explain the relation of our hierarchies $\mathfrak{H}_1$ and
$\mathfrak{H}_2$ to results known in mathematics, and demonstrate how the hierarchies can also be adapted for the optimization over real instead of complex product vectors.

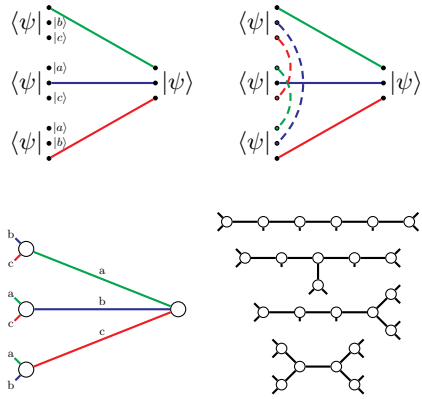
\begin{figure}[t]
    \centering
    \begin{minipage}{\linewidth}
        \centering
        \begin{tikzpicture}

	\tikzmath{
	\x=2; 
	\y=0.8; 
	\lx=0.3; 
	\ly=0.2; 
	\r=0.5pt; 
	\la=0.15; 
	\s=0.5; 
	\w=0.2; 
	} 
	
\begin{scope}[scale=1]

	\node[draw=none](C1) at (0,0) {$\langle \psi |$};
	\node[draw=none] (C2) at (0,-\y) {$\langle \psi |$};
	\node[draw=none] (C3) at (0,-2*\y) {$\langle \psi |$};
	
	\node[draw=none] (C4) at (\x,-\y) {$| \psi \rangle$};
	
	\node[fill=black,circle,inner sep=\r,draw] (C11) at (\lx,0+\ly) {};
	\node[fill=black,circle,inner sep=\r,draw] (C12) at (\lx,0) {};
	\node[fill=black,circle,inner sep=\r,draw] (C13) at (\lx,0-\ly) {};
	
	\node[fill=black,circle,inner sep=\r,draw] (C21) at (\lx,-\y+\ly) {};
	\node[fill=black,circle,inner sep=\r,draw] (C22) at (\lx,-\y) {};
	\node[fill=black,circle,inner sep=\r,draw] (C23) at (\lx,-\y-\ly) {};
	
	\node[fill=black,circle,inner sep=\r,draw] (C31) at (\lx,-2*\y+\ly) {};
	\node[fill=black,circle,inner sep=\r,draw] (C32) at (\lx,-2*\y) {};
	\node[fill=black,circle,inner sep=\r,draw] (C33) at (\lx,-2*\y-\ly) {};
	
	\node[fill=black,circle,inner sep=\r,draw] (C41) at (\x-\lx,-\y+\ly) {};
	\node[fill=black,circle,inner sep=\r,draw] (C42) at (\x-\lx,-\y) {};
	\node[fill=black,circle,inner sep=\r,draw] (C43) at (\x-\lx,-\y-\ly) {};

	\draw[draw=alice, thick]  (C11) edge  (C41);
	\draw[draw=bob, thick]  (C22) edge  (C42);
	\draw[draw=charlie, thick]  (C33) edge  (C43);

	\node[rectangle,draw=none,scale=\s] (a12) at (\lx+\la, 0) {$| b \rangle$};
	\node[rectangle,draw=none,scale=\s] (a13) at (\lx+\la, -\ly) {$| c \rangle$};
	
	\node[rectangle,draw=none,scale=\s] (a21) at (\lx+\la, -\y+\ly) {$| a \rangle$};
	\node[rectangle,draw=none,scale=\s] (a23) at (\lx+\la, -\y-\ly) {$| c \rangle$};
	
	\node[rectangle,draw=none,scale=\s] (a31) at (\lx+\la, -2*\y+\ly) {$| a \rangle$};
	\node[rectangle,draw=none,scale=\s] (a32) at (\lx+\la, -2*\y) {$| b \rangle$};

\end{scope}

\begin{scope}[scale=1,shift={(1.5*\x,0)}]

	\node[draw=none](C1) at (0,0) {$\langle \psi |$};
	\node[draw=none] (C2) at (0,-\y) {$\langle \psi |$};
	\node[draw=none] (C3) at (0,-2*\y) {$\langle \psi |$};
	
	\node[draw=none] (C4) at (\x,-\y) {$| \psi \rangle$};
	
	\node[fill=black,circle,inner sep=\r,draw] (C11) at (\lx,0+\ly) {};
	\node[fill=bob,circle,inner sep=\r,draw] (C12) at (\lx,0) {};
	\node[fill=charlie,circle,inner sep=\r,draw] (C13) at (\lx,0-\ly) {};
	
	\node[fill=alice,circle,inner sep=\r,draw] (C21) at (\lx,-\y+\ly) {};
	\node[fill=black,circle,inner sep=\r,draw] (C22) at (\lx,-\y) {};
	\node[fill=charlie,circle,inner sep=\r,draw] (C23) at (\lx,-\y-\ly) {};
	
	\node[fill=alice,circle,inner sep=\r,draw] (C31) at (\lx,-2*\y+\ly) {};
	\node[fill=bob,circle,inner sep=\r,draw] (C32) at (\lx,-2*\y) {};
	\node[fill=black,circle,inner sep=\r,draw] (C33) at (\lx,-2*\y-\ly) {};
	
	\node[fill=black,circle,inner sep=\r,draw] (C41) at (\x-\lx,-\y+\ly) {};
	\node[fill=black,circle,inner sep=\r,draw] (C42) at (\x-\lx,-\y) {};
	\node[fill=black,circle,inner sep=\r,draw] (C43) at (\x-\lx,-\y-\ly) {};

	\draw[draw=alice, thick]  (C11) edge  (C41);
	\draw[draw=bob, thick]  (C22) edge  (C42);
	\draw[draw=charlie, thick]  (C33) edge  (C43);
	
	\path[densely dashed,draw=bob, thick] (C12)  edge [bend left=45] (C32);
	\path[densely dashed,draw=alice, thick] (C21)  edge [bend left=45] (C31);
	\path[densely dashed,draw=charlie, thick] (C13)  edge [bend left=45] (C23);

\end{scope}

\begin{scope}[scale=1,shift={(0,-1.5*\x)}]

	\node[fill=none,circle,inner sep=4*\r,draw](C1) at (0,0) {};
	\node[fill=none,circle,inner sep=4*\r,draw] (C2) at (0,-\y) {};
	\node[fill=none,circle,inner sep=4*\r,draw] (C3) at (0,-2*\y) {};
	
	\node[fill=none,circle,inner sep=4*\r,draw] (C4) at (\x,-\y) {};
	
	\node[draw=none,scale=\s,inner sep=\r] (l12) at (-\w,+\w) {b};
	\node[draw=none,scale=\s,inner sep=\r] (l13) at  (-\w,-\w) {c};
	
	\node[draw=none,scale=\s,inner sep=\r] (l21) at   (-\w,-\y+\w) {a};
	\node[draw=none,scale=\s,inner sep=\r] (l23) at  (-\w,-\y-\w) {c};
	
	\node[draw=none,scale=\s,inner sep=\r] (l31) at  (-\w,+-2*\y+\w) {a};
	\node[draw=none,scale=\s,inner sep=\r] (l32) at  (-\w,-2*\y-\w) {b};
		
	\draw[draw=alice, thick]  (C1) edge node[above,scale=\s] {a} (C4);
	\draw[draw=bob, thick]  (C2) edge node[above,scale=\s] {b} (C4);
	\draw[draw=charlie, thick]  (C3) edge node[above,scale=\s] {c} (C4);
	
	\draw[draw=bob, thick] (C1)  edge (l12);
	\draw[draw=charlie, thick] (C1)  edge (l13);
	
	\draw[draw=alice, thick] (C2)  edge (l21);
	\draw[draw=charlie, thick] (C2)  edge (l23);
	
	\draw[draw=alice, thick] (C3)  edge (l31);
	\draw[draw=bob, thick] (C3)  edge (l32);

\end{scope}

\begin{scope}[scale=0.6,shift={(2.2*\x,-2.2*\x)}]

	\tikzmath{
	\x=0.8; 
	\y=0.6; 
	\yy=0.8; 
	\r=0.35pt; 
	\w=0.2; 
	} 
	
\begin{scope}[scale=1, every node/.style={fill=none,circle,inner sep=4*\r,draw}, every edge/.style={draw=black, thick}]
	
	\node(N1) at (0,0) {};
	\node(N2) at (\x,0) {};
	\node(N3) at (2*\x,0) {};
	\node(N4) at (3*\x,0) {};
	\node(N5) at (4*\x,0) {};
	\node(N6) at (5*\x,0) {};
	
	\draw (N1)  edge (N2);
	\draw (N2)  edge (N3);
	\draw (N3)  edge (N4);
	\draw (N4)  edge (N5);
	\draw (N5)  edge (N6);
	
	\draw (N1) edge  (-\w,+\w) ;
	\draw (N1) edge  (-\w,-\w) ;
	
	\draw (N2) edge  (\x,-\w) ;
	\draw (N3) edge  (2*\x,-\w) ;
	\draw (N4) edge  (3*\x,-\w) ;
	\draw (N5) edge  (4*\x,-\w) ;
	
	\draw (N6) edge  (5*\x+\w,+\w) ;
	\draw (N6) edge  (5*\x+\w,-\w) ;

\end{scope}

\begin{scope}[scale=1,shift={(0.5*\x,-\yy)}, every node/.style={fill=none,circle,inner sep=4*\r,draw}, every edge/.style={draw=black, thick}]
	
	\node(N1) at (0,0) {};
	\node(N2) at (\x,0) {};
	\node(N3) at (2*\x,0) {};
	\node(N4) at (3*\x,0) {};
	\node(N5) at (4*\x,0) {};
	
	\node(N6) at (2*\x,-\y) {};
	
	\draw (N1)  edge (N2);
	\draw (N2)  edge (N3);
	\draw (N3)  edge (N4);
	\draw (N4)  edge (N5);
	
	\draw (N3)  edge (N6);

	\draw (N1) edge  (-\w,+\w) ;
	\draw (N1) edge  (-\w,-\w) ;
	
	\draw (N2) edge  (\x,-\w) ;
	\draw (N4) edge  (3*\x,-\w) ;
	
	\draw (N5) edge  (4*\x+\w,+\w) ;
	\draw (N5) edge  (4*\x+\w,-\w) ;
	
	\draw (N6) edge  (2*\x-\w,-\y-\w) ;
	\draw (N6) edge  (2*\x+\w,-\y-\w) ;

\end{scope}

\begin{scope}[scale=1,shift={(\x,-2.5*\yy)}, every node/.style={fill=none,circle,inner sep=4*\r,draw}, every edge/.style={draw=black, thick}]
	
	\node(N1) at (0,0) {};
	\node(N2) at (\x,0) {};
	\node(N3) at (2*\x,0) {};
	\node(N4) at (3*\x,0) {};
	
	\node(N5) at (3.5*\x,-0.5*\x) {};
	\node(N6) at (3.5*\x,0.5*\x) {};
	
	\draw (N1)  edge (N2);
	\draw (N2)  edge (N3);
	\draw (N3)  edge (N4);
	
	\draw (N4)  edge (N5);
	\draw (N4)  edge (N6);

	\draw (N1) edge  (-\w,+\w) ;
	\draw (N1) edge  (-\w,-\w) ;
	
	\draw (N2) edge  (\x,-\w) ;
	\draw (N3) edge  (2*\x,-\w) ;
	
	\draw (N5) edge  (3.5*\x+\w,-0.5*\x+\w) ;
	\draw (N5) edge  (3.5*\x+\w,-0.5*\x-\w)  ;
	
	\draw (N6) edge  (3.5*\x+\w,0.5*\x+\w);
	\draw (N6) edge  (3.5*\x+\w,0.5*\x-\w) ;

\end{scope}

\begin{scope}[scale=1,shift={(1.5*\x,-4*\yy)}, every node/.style={fill=none,circle,inner sep=4*\r,draw}, every edge/.style={draw=black, thick}]
	
	\node(N1) at (0,-0.5*\x) {};
	\node(N2) at (0,0.5*\x) {};
	
	\node(N3) at (0.5*\x,0) {};
	\node(N4) at (1.5*\x,0) {};
	
	\node(N5) at (2*\x,-0.5*\x) {};
	\node(N6) at (2*\x,0.5*\x) {};

	\draw (N1)  edge (N3);
	\draw (N2)  edge (N3);
	
	\draw (N3)  edge (N4);
	
	\draw (N4)  edge (N5);
	\draw (N4)  edge (N6);

	\draw (N1) edge  (-\w,-0.5*\x+\w) ;
	\draw (N1) edge  (-\w,-0.5*\x-\w) ;
	
	\draw (N2) edge  (-\w,0.5*\x+\w) ;
	\draw (N2) edge  (-\w,0.5*\x-\w) ;

	\draw (N5) edge  (2*\x+\w,-0.5*\x+\w) ;
	\draw (N5) edge  (2*\x+\w,-0.5*\x-\w)  ;
	
	\draw (N6) edge  (2*\x+\w,0.5*\x+\w);
	\draw (N6) edge  (2*\x+\w,0.5*\x-\w) ;

\end{scope}

\end{scope}

\end{tikzpicture}
    \end{minipage}
    \caption{Graphical description of the possible estimates in the second 
    hierarchy $\mathfrak{H}_2$.
    Upper left: a visualization of Eq.~\eqref{eq:2nd_hierarchy_example}.
    Upper right: relaxation from product states; the dashed lines denote the application of a projector onto the symmetric subspace.
    Lower left: resulting connectivity graph; the short legs denote the indices on which the symmetric projector acts.
    Lower right: possible connectivity graphs for six copies of a tripartite state. Note that different labelings are possible which may lead to different results.
    }
    \label{fig:second_hierarchy_wirings}
\end{figure}


In the third hierarchy $\mathfrak{H}_3$, instead of taking multiple copies of
$\ket{\psi}$, we take just one copy of $\ket{\psi}$ and tensor product it
with $k-1$ identity operators, i.e.,
\begin{equation}
   \Lambda^2(\psi) 
    =\max_{\ket{abc}}  [\bra{abc}^{\otimes k}]
    [\ketbra{\psi}{\psi}\otimes\openone^{\otimes k-1}]
    [\ket{abc}^{\otimes k}].
\end{equation}
Again, by taking advantage of the fact that the projector on the symmetric subspace can be inserted without changing the above expression, we get an upper bound on $\Lambda^2$ by the maximal eigenvalue of the above operator, i.e.,
\begin{equation}
   \Lambda^2(\psi) 
    \le\lambda_{\max} \qty(\Pi_k^{\otimes 3}
    [ \ketbra{\psi}{\psi}\otimes\openone^{\otimes k-1}]
    \Pi_k^{\otimes 3}).
\end{equation}
Using the quantum de Finetti theorem, we can prove that
the hierarchy $\mathfrak{H}_3$ is also complete. For the 
proof and implementation details we refer to Appendices~\ref{app:proof_convergence_third_hierarchy} and \ref{app:third_numerical}.

\section{Results}\label{sec:results}
As a first example of the three hierarchies, we bound the geometric measure for superpositions of the three-qubit 
W and GHZ state,
$\ket{\psi(s)}=\sqrt{s}\ket{W} + \sqrt{1-s}\ket{\mathrm{GHZ}}$,
with 
$\ket{W} = (\ket{001}+ \ket{010}+\ket{100})/\sqrt{3}$ 
and 
$\ket{\mathrm{GHZ}}=
 (\ket{000}+ \ket{111})/\sqrt{2}$, see also Fig.~\ref{fig:W_vs_GHZ}. The levels used for the three hierarchies were $25$, $10$, and $60$,  respectively. All three hierarchies are numerically tight for 
 a wide range of $s$, while the gap increases for the more entangled states. In particular $\mathfrak{H}_3$ is nearly indistinguishable from the correct value
 over the whole parameter range.

\begin{figure}[t!]
    \centering
    \includegraphics[trim={0mm 2.1mm 0mm 0mm}, clip, width=0.9\linewidth]{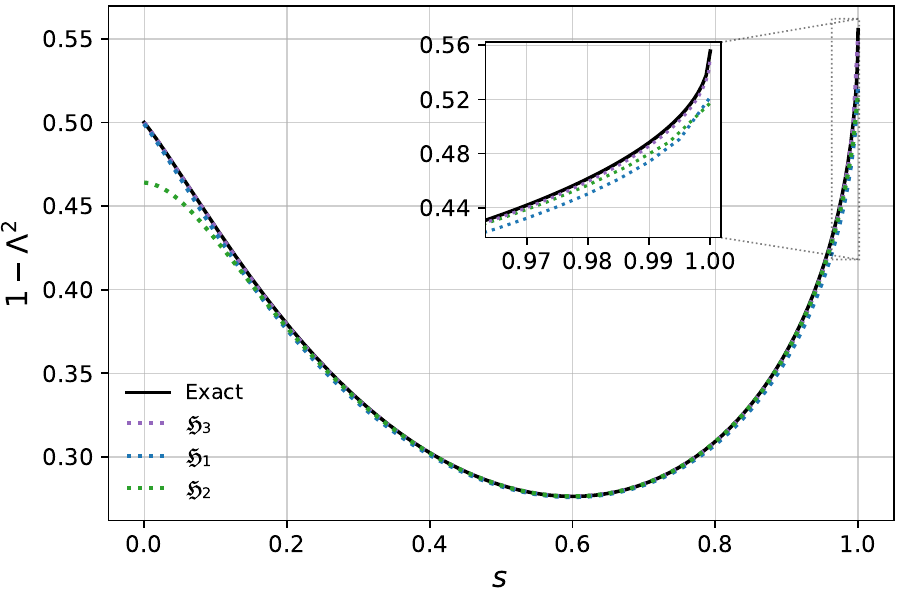}
\caption{Lower bounds from the three hierarchies for states of the form $\ket{\psi(s)}=\sqrt{s}\ket{W} + \sqrt{1-s}\ket{\mathrm{GHZ}}$. We also show 
the exact value obtained by solving the one-parameter optimization over
symmetric product states \cite{wei2003geometric}. For the first hierarchy $\mathfrak{H}_1$ we
use level $25$ and consider the Schmidt decomposition of 
$\ket{F_k}$ with respect to the bipartition between the 
first copy and the rest of the copies for improvement. 
For $\mathfrak{H}_2$, for which we only consider the norm
and level $10$, the used connectivity graph is described
in Appendix~\ref{app:trees}. For $\mathfrak{H}_3$ a level of $60$
was used.
}
\label{fig:W_vs_GHZ}
\end{figure}

The hierarchies can also handle larger number of particles. 
To this end, we show the rates of convergence for the different hierarchies for $\ket{C_5}$, the $5$-cycle graph state, in Fig.~\ref{fig:convergence_C5}. This state is the maximally entangled state with respect to the geometric measure on 
five qubits having a value of $E_G=0.86855$\,\cite{steinberg2024finding}.

\begin{figure}[t]
    \centering
    \includegraphics[trim={0mm 2.1mm 0mm 0mm}, clip, width=0.9\linewidth]{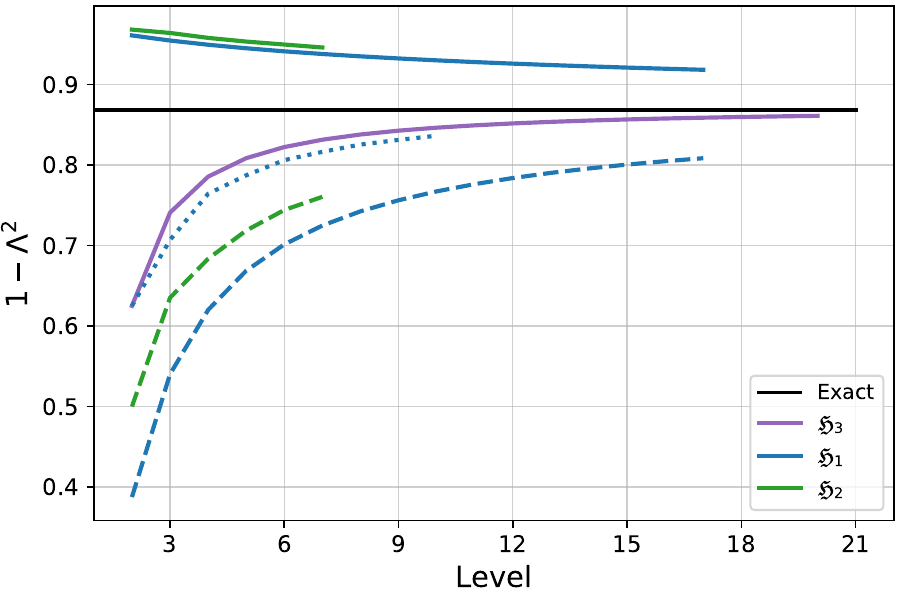}
\caption{Bounds on the geometric measure of the $5$-cycle graph state $\ket{C_5}$, which is the maximally entangled state w.r.t.~the geometric measure for five qubits. The horizontal line at height $0.86855$ 
is the value of the geometric measure of $\ket{C_5}$~\cite{steinberg2024finding}. For $\mathfrak{H}_1$ and $\mathfrak{H}_2$ we show the upper and lower bounds in solid and dashed, respectively. For $\mathfrak{H}_1$ we also show the lower bound found by considering the largest Schmidt coefficient of $\ket{F_k}$ with respect to a balanced bipartition between the copies, i.e., the bipartition between the first and second half of the copies. Note that the numerical package ENTCALC \cite{masajada2025entcalctoolkitcalculatinggeometric}
provides for this state only a certified bound in the interval $[0.7292, 0.9375]$. 
}
\label{fig:convergence_C5}
\end{figure}

\section{Relation to the product test}
The construction of the hierarchies above is in close connection 
to the so-called product test for entanglement \cite{mintert2005concurrence,beausoleil2006test,harrow2013testing,foulds2021controlled}. For this, one 
first considers the so-called SWAP test for two particles 
$\varrho_{A_1}$ and $\varrho_{A_2}$, which asks whether the
two states are the same and is passed with success probability
$1/2 + \Tr(\varrho_{A_1}\varrho_{A_2})/2 = 
\Tr(\Pi_2\varrho_{A_1}\otimes \varrho_{A_2}).$ Then, if 
two copies of a pure multiparticle state $\ket{\psi}$ are 
given, one can apply this SWAP test to each pair of particles
$A_1 A_2$, $B_1 B_2$ etc.~locally and it is clear that the 
two states pass this test with unit probability if and only 
if $\ket{\psi} = 
\ket{a}\ket{b}\ket{c}$ is a product state. Several works 
investigated the statistical significance of the test
for special cases: For two copies of multiparticle states 
it was shown that the quantity
$\Lambda^2$ is a lower bound on the passing probability
\cite{harrow2013testing}, and for two particles and more copies 
it was demonstrated that the passing probability converges
to $\Lambda^2$ if the number of copies increases \cite{she2022unitary}.

The completeness of the hierarchy $\mathfrak{H}_1$ naturally explains and generalizes these results. In fact, $\Vert\ket{F_k}\Vert^2$ can be understood as the probability 
that the multipartite state $\ket{\psi}$ passes the 
multi-copy product test and the completeness in Eq.~(\ref{eq-observation1}) gives upper and lower
bounds on this probability in terms of the 
entanglement of the state
(see also Eq.~\eqref{eq:boundProducttest} in Appendix~\ref{app:proof_convergence_first_hierarchy}).

Interestingly, the fact that the hierarchy $\mathfrak{H}_2$ gives strictly better upper bounds on $\Lambda^2$ than $\mathfrak{H}_1$ (when considering solely the norm in both cases) shows that, if several copies are available, strictly better tests than the standard multi-copy product test are available.

We note that there are several proposals to implement the product test using 
quantum circuits \cite{bradshaw2023cycle,liu2025MeasEntSym,liu2025generalized} or randomized measurements \cite{Elben2020MixedRandMeas,Cieslinski2024RandMeasReview}, 
these methods are also appropriate for implementing the generalized 
product tests based on our hierarchies. Finally, we note that the 
two-copy scenario described above is reminiscent of Bell sampling, 
where two copies of a state $\rho$ are pairwise measured in the Bell
basis~\cite{montanaro2017learning}.

\section{Application to Operators}
Interestingly, the  above hierarchies can be adopted also to 
operators, considering the related notion of the maximal separable 
numerical range $M(X) = \max_{\ket{a,b,c}} \bra{abc}X\ket{abc}$
where $X$ is a general  operator \cite{puchala2011product}. 
This quantity arises e.g.~in the construction of entanglement 
witnesses and the distillation problem, as well as in problems in many-body physics \cite{dowling2004energy}. Generalizing the first 
hierarchy $\mathfrak{H}_1$, one then finds $M^k(X) \leq \lambda_{\text{max}} \qty( \Pi_k^{\otimes 3} X^{\otimes k}\Pi_k^{\otimes 3} )$,
which also converges for positive operators to the correct value in 
the limit $k\rightarrow \infty$, as can be seen by considering the 
purification $\ket{\psi}$ of $X$ on four parties, where $\Lambda^2(\psi)=M(X)$
\cite{harrow2013testing, weinbrenner2025quantifying}.
An analogous result holds for the third hierarchy, $\mathfrak{H}_3$, where
convergence is guaranteed for any $X$.

As a simple, yet nontrivial example for two parties we consider the 
projector $P_{\text{UPB}}$ on an unextendible product basis (UPB) 
given in~\cite{bennett1999unextendible} (and recounted in 
Appendix~\ref{app:UPB}). Since there is no product 
vector in the space orthogonal to an UPB, the state $X_{\text{UPB}} \propto\id-P_{\text{UPB}}$ is an entangled state. For the witness construction one needs then the quantity $\delta=M(X_{\text{UPB}})$. Numerically this value was shown 
to be $\delta\approx 0.97158$ \cite{guhne2003experimental}, analytically 
only the bound $\delta\leq 0.998703$ is known \cite{terhal2001family}. 
Using the first hierarchy we find for $k=7$ a bound of $\delta \leq 0.976057$ 
and with the third hierarchy for $k=11$ a bound of $\delta \leq 0.973382$, 
improving in both cases the previous analytical bound significantly. 

More generally, the extension to operators sheds some light on the complexity
of the separability problem itself. Generally, deciding whether a mixed state 
admits a separable decomposition is Nondeterministic Polynomial-time (NP) hard~\cite{gurvits2003entNPhard,gharibian2010strong}. In practice, it can
be tackled by semidefinite programming (SDP) hierarchies approximating the 
separable set~\cite{Doherty2004complete}, where the dimension of the 
SDP increases in each step of the hierarchy. Consequently, it is clear
that $M(X)$ can be computed by considering a similar hierarchy. Our results show, 
however, that the computation of $M(X)$ (or, equivalently, checking whether
a given observable is an entanglement witness) can also be determined
by a hierarchy of eigenvalue problems, which may reduce the complexity.

A further possible application considers the optimization over operators. For example, one might consider an SLOCC-class $S_{\ket{\phi}}$ (stochastic local operations and classical communications), consisting of all states that can be reached from $\ket{\phi}$ by SLOCC-operations, i.e., $A_1\otimes \dots\otimes A_n\ket{\phi}$. For a given state $\ket{\psi}$ one might be interested in the maximal achievable overlap with states from $S_{\ket{\phi}}$, leading to the value $\max_{A_1\otimes\dots A_n} |\bra{\psi}A_1\otimes\dots \otimes A_n \ket{\phi}|$. Interestingly, this problem can be rephrased as an optimization over product states, see, e.g., Refs.~\cite{kampermann2012algorithm,ritz2019characterizing}.

\section{Generalization to mixed states}
So far, we considered the geometric measure for pure states only. In the 
real world, however, noise is 
unavoidable, leading to the occurrence of mixed states 
or density matrices. For such a density matrix $\varrho$
the geometric measure is defined via the convex roof 
construction,
\begin{equation}
E_G(\varrho) = \min_{p_k, {\psi_k}} \sum_k p_k E_G (\psi_k)    
\end{equation}
where the minimum is taken over all decompositions 
$\varrho= \sum_k p_k \ketbra{\psi_k}{\psi_k}$ of the mixed states
into pure states $\psi_k$  with some probabilities $p_k$. Naturally,
this minimization is difficult to compute. Upper bounds on 
$E_G(\varrho)$ may be obtained by brute-force numerical 
optimization, but specifically lower bounds are difficult and
only few exact results are known \cite{guhne2008multiparticle,buchholz2016evaluating,masajada2025entcalctoolkitcalculatinggeometric}.

Taking the viewpoint of a physicist,  it is a critical question whether 
the presented hierarchy can give insights into the quantification
of entanglement for mixed states. In the following, using the approach
of Ref.~\cite{toth2015evaluating} we will demonstrate that this is indeed the case.  
To see this, consider the first step of the hierarchy for three 
particles, as presented in Eq.~\eqref{eq:genidea2}. This implies that the 
geometric measure for pure states $\varrho=\ketbra{\psi}{\psi}$ is bounded by 
\begin{equation}
E(\psi) \geq \tr(G_2 \varrho^{\otimes 2})    
\end{equation}
with $G_2 = \openone - \Pi_A \otimes \Pi_B \otimes \Pi_C$. If $\varrho$
is a mixed state, then any decomposition 
$\varrho= \sum_k p_k \ketbra{\psi_k}{\psi_k}$ is in one-to-one 
correspondence to a two-copy state 
$\sigma_{1,2} = \sum_k p_k \ketbra{\psi_k}{\psi_k} \otimes \ketbra{\psi_k}{\psi_k}$
which is separable (with respect to the partition coming from the
two copies), in the symmetric subspace and reduces to $\varrho$ via $\tr_2(\sigma_{1,2}) = \varrho.$
So it follows that
\begin{align}
\label{eq-mixed-bound}
E(\varrho)  \geq & \min_{\sigma_{1,2}}\tr(G_2 \sigma_{1,2})   
 \\
& \mbox{s.t.:  } \sigma_{1,2} \mbox{ separable, symmetric and }
\tr_2(\sigma_{1,2}) = \varrho.
\nonumber
\end{align}
The point is that the separability condition can be relaxed by requiring
that $\sigma_{1,2}$ is only PPT with respect to the $1|2$ partition and
the resulting lower bound can be directly computed via semidefinite programming (SDP). This approach can be generalized to higher orders
of our hierarchy or to other relaxations of the separability constraint
in Eq.~(\ref{eq-mixed-bound}). Note finally that there are other approaches
that can be used to convert our results for pure states to characterizing
mixed states \cite{mintert2007entanglement}.

In fact, already the simple PPT relaxation of the lowest order in 
Eq.~(\ref{eq-mixed-bound}) gives surprisingly strong criteria for 
entanglement. As a first example, we considered GHZ states mixed 
with white noise $\varrho(p) = p \ketbra{\mathrm{GHZ}}{\mathrm{GHZ}} + (1-p) \openone/8$. 
These are known to be entangled if and only if $p > 1/5$~\cite{schack2000explProdEnsSEP,Dur2000ClassMultiQubit} and the SDP directly reproduced this threshold. The second example
are W states mixed with noise, $\varrho(p) = p \ketbra{W}{W} + (1-p) \openone/8$. These states are known to be fully separable for 
$p \leq 0.177$, entangled for $p>\sqrt{3}/(8+\sqrt{3}) \approx 0.1778$,
but still separable for any bipartition until $p \leq 0.2095$~\cite{chen2012estimating}. 
This means that entanglement in the regime $p \in [0.1778, 0.2095]$ cannot be detected 
by any bipartite criterion, such as the PPT criterion. Our SDP, however, directly allows to detect 
the entanglement already with the second level of the hierarchy for all
$p \geq 0.1781$, demonstrating the ability to detect also weakly entangled
states in a straightforward manner. Finally, the three-qubit Tao state $\tau$---which is the most robust entangled state that is biseparable for any bipartition~\cite{ohst2024certifying}---can be detected in a parameter regime where all other known criteria fail, further details can be found in Appendix~\ref{app:Mixed}. Finally, we add that the presented approach can also be used to
characterize the convex roof of other entanglement monotones 
\cite{beckey2021computable, schatzki2024hierarchy} which
attracted some attention recently and have a formal similarity to the steps of the hierarchies.

\section{Discussion}
We introduced three different hierarchies for the geometric measure of entanglement and showed their convergence, considering multiple copies of the target state and applying different symmetrization arguments. On the mathematical side, this implies a characterization of the injective tensor norm as a limit of the 2-norms of a family of vectors. On the physical side, this leads to easily computable upper and lower bounds on the geometric measure of entanglement. We find that the first hierarchy $\mathfrak{H}_1$ connects closely with the product test, extending results known for the bipartite case to the multipartite setting. The third hierarchy we find to provide the best lower bounds in practice.

An interesting open question, which initiated also the
mathematical discussion \cite{helfgott}, is whether the hierarchies can also be interpreted in a combinatorial way. The known upper bound for symmetric matrices can be understood as counting the closed walks in a weighted graph, where the weights are described by the matrix. For the case of complex tensors, such an interpretation 
requires still some work, but may result in an abundant source of further inspiration. 

A more physical application is the maximization of state overlaps
over local unitaries, which is 
also an optimization over product structures, see e.g. \cite{castellano2025parallel}. The hierarchies seem to be relevant also in this case, although a more detailed analysis is needed. For example, 
in the case of unitary qubit operators one optimizes implicitly over real 
instead of complex states \cite{Neven2016FidSymStates}, so it might be worth investigating further the relationship between the optimization over real and complex states.
Mathematically, the geometric measure can also be seen as the minimal distance of a higher rank tensor to the rank-1 tensors. Naturally, this can be generalized to higher ranks, for example, asking for the best rank-2 approximation of a given tensor. It would be interesting to investigate whether similar hierarchies can be formulated for these cases.

{\it Acknowledgments---}%
We thank Riccardo Castellano, Sophia Denker, Shmuel Friedland, Sev Gharibian, 
Ties-Albrecht Ohst, Jordi Romero-Pallejà, Dorian Rudolph, Rene Schwonnek for discussions, and especially Nikolai Wyderka for help with the proof of the 
real injective tensor norm in Appendix E.
This work was supported by the Deutsche Forschungsgemeinschaft 
(DFG, German Research Foundation, project number 563437167), the 
Sino-German Center for Research Promotion (Project M-0294),
and 
the German Federal Ministry of Research, Technology and Space (Project QuKuK, Grant No. 16KIS1618K and Project
BeRyQC, Grant No. 13N17292).
LTW acknowledges the support from the House of Young Talents of the University of Siegen. 
AR acknowledges financial support from Spanish MICIN (projects: PID2022; 141283NBI00; 139099NBI00) with the support of FEDER funds, the Spanish Government with funding
from European Union NextGenerationEU (PRTR-C17.I1), the Generalitat de Catalunya,
the Ministry for Digital Transformation and of Civil Service of the Spanish Government through the QUANTUM ENIA project -Quantum Spain Project- through the Recovery, Transformation and Resilience Plan Next-Generation EU within the framework
of the Digital Spain 2026 Agenda. 
KG acknowledges the support from the Alexander von Humboldt Foundation. 
XDY acknowledges the support from the National Natural Science Foundation of China
(Grants No.~12574537 and No.~12205170)
and the Shandong Provincial Natural Science Foundation of China (Grant No. ZR2022QA084).

\onecolumngrid
\section*{APPENDIX}
\appendix


We prove in Appendix \ref{app:proof_convergence_first_hierarchy} the completeness of the first hierarchy $\mathfrak{H}_1$; in Appendix \ref{subsec:Herarchy1} we show how to reduce the computation time and include the restriction to PPT states for $\mathfrak{H}_1$. Appendix \ref{app:trees} contains the implementation details for the second hierarchy $\mathfrak{H}_2$, and we prove its completeness in Appendix \ref{app:proof_convergence_second_hierarchy}. We detail the connection between our results and the ideas from H.~A.~Helfgott and S.~Friedland in Appendix \ref{app:connection_helfgott_friedland}. Completeness and implementation details of the third hierarchy $\mathfrak{H}_3$ are shown in Appendices \ref{app:proof_convergence_third_hierarchy} and \ref{app:third_numerical}. Appendix \ref{app:UPB} discusses how hierarchies $\mathfrak{H}_1$ and $\mathfrak{H}_3$ can be used to find the separable numerical range of operators. We conclude by finding lower bounds on the convex roof extension of $E_G$ for mixed states in Appendix \ref{app:Mixed}.

\section{The first hierarchy $\mathfrak{H}_1$: proof of Observation~\ref{obs:convergence}}\label{app:proof_convergence_first_hierarchy}

\noindent
{\bf Observation~\ref{obs:convergence}.} {\em Given an $n$-partite state $\ket{\psi}$, and for any $k$, we have the two-sided bound
\begin{equation}
    d_k \Vert \ket{F_k} \Vert^{2/k}
    \leq 
    \Lambda^2(\psi)
    \leq
    \Vert \ket{F_k} \Vert^{2/k},
\end{equation}
where the coefficients $d_k:=\binom{k+d-1}{k}^{-n/k}<1$ 
converge to one, $\lim_{k\rightarrow\infty} d_k =1$. 
}

\begin{proof}
For a general number of copies $k$, we define
$ \ket{F_k} = \Pi_k\otimes \Pi_k\otimes \Pi_k [\ket{\psi}^{\otimes k}]$, where the $k$-copy symmetric projector is given by $\Pi_k=\sum_{\pi\in S_k}V_\pi/k!$ with $S_k$ being the symmetric group of $k$-element permutations $\pi$ and $V_\pi$ their representation on $(\mathbb{C}^d)^{\otimes k}$. This definition for three-partite states extends naturally to $n$-partite states, through $ \ket{F_k} = \Pi_k^{\otimes n} [\ket{\psi}^{\otimes k}]$. 

As sketched in the main text, we first note that 
\begin{align}
    \Lambda(\psi) :&=\max_{\phi\in\text{SEP}}|\langle\phi\ket{\psi}|
\nonumber
\\ 
    &= \max_{\phi\in\text{SEP}}|\langle\phi|^{\otimes k}\ket{\psi}^{\otimes k}|^{1/k}
    \nonumber
    \\
    &= \max_{\phi\in\text{SEP}}|\langle\phi|^{\otimes k}\Pi_k^{\otimes n}\ket{\psi}^{\otimes k}|^{1/k} \nonumber \\
    &\leq \max_{\Phi\in{(\mathbb{C}^d)}^{n k}}|\bra{\Phi}\Pi_k^{\otimes n}\ket{\psi}^{\otimes k}|^{1/k}\nonumber \\
    &= ||\ket{F_k}||^{1/k}\,.
\end{align}
The third equality holds because $\Pi^{\otimes n}_k\ket{\phi}^{\otimes k}=\ket{\phi}^{\otimes k}$ for any separable state $\ket{\phi}$. The inequality holds because the maximization over general states $\ket{\Phi}$ contains in particular $\ket{\phi}\in\text{SEP}$, and the next equality holds because the state $\ket{\Phi}$ with maximal overlap with $\ket{F_k}$ is proportional to it, $\ket{\Phi}\propto\ket{F_k}$. This proves the right-hand-side inequality in Observation~\ref{obs:convergence}. 

\bigskip

To prove the left-hand-side inequality in Observation~\ref{obs:convergence}, we will write the symmetric projector as
\begin{equation}
    \Pi_k=\binom{k+d-1}{k}\int_{\phi\in \mathrm{Haar}}\ket{\phi}\bra{\phi}^{\otimes k}d\phi\,,
\end{equation}
where Haar is the flat Haar-random measure on pure states in ${\mathbb{C}^d}$ \cite{harrow2013church}.
Therefore, $n$ copies of $\Pi_k$ read
\begin{equation}
    \Pi_k^{\otimes n}=\binom{k+d-1}{k}^n\int_{\phi_1,...,\phi_n\in \mathrm{Haar}} \ket{\phi_1...\phi_n}\bra{\phi_1...\phi_n}^{\otimes k}d\phi_1...d\phi_n\,.
\end{equation}
Denote $\ket{a_1...a_n}$ the closest separable state to $\ket{\psi}$, namely $\Lambda(\psi)=|\bra{a_1...a_n}\psi\rangle|$. Then the norm $||\Pi_k^{\otimes n}\ket{\psi}^{\otimes k}||^2$ reads
\begin{align}\label{eq:first_hierarchy_conv_proof}
    ||\Pi_k^{\otimes n}\ket{\psi}^{\otimes k}||^2 &= \bra{\psi}^{\otimes k}\Pi_k^{\otimes n} \ket{\psi}^{\otimes k}
    \nonumber\\
    &= \binom{k+d-1}{k}^n\int_{\phi_1,...,\phi_n\in \mathrm{Haar}} |\bra{\phi_1...\phi_n}\psi\rangle|^{2 k}d\phi_1...d\phi_n\nonumber\\
    &\leq \binom{k+d-1}{k}^n|\bra{a_1...a_n}\psi\rangle|^{2 k}\int_{\phi_1,...,\phi_n\in \mathrm{Haar}} d\phi_1...d\phi_n\nonumber\\
    &= \binom{k+d-1}{k}^n|\bra{a_1...a_n}\psi\rangle|^{2 k}\nonumber\\
    &= \binom{k+d-1}{k}^n\Lambda^{2 k}(\psi)
\end{align}
since $\int_{\phi_1,...,\phi_n\in \mathrm{Haar}} d\phi_1...d\phi_n=1$. Taking the $k$'th root at both sides we have that
\begin{equation}
    ||\Pi_k^{\otimes n}\ket{\psi}^{\otimes k}||^{2/k} \leq \binom{k+d-1}{k}^{n/k}\Lambda^{2}(\psi)\,.
\end{equation}
Defining $d_k:=\binom{k+d-1}{k}^{-n/k}$, and observing that $d_k<1$ for finite $k$ by definition, this means that
\begin{equation}
    d_k||\ket{F_k}||^{2/k}\leq\Lambda^2(\psi)\,,
\end{equation}
which proves the left-hand-side inequality in Observation~\ref{obs:convergence}.

\bigskip

It is only left to see convergence of the hierarchy, namely that $d_k$ tends to $1$ in the limit $k\rightarrow\infty$. 
For this we note that the expression $\binom{k+d-1}{k}=\frac{1}{(d-1)!}\prod_{i=1}^{d-1}(k+i)$ is a polynomial in $k$ of degree $d-1$, so the leading terms of $d_k$ are of the order $k^{-\frac{dn}{k}}$. We therefore find
\begin{align}
    d_k = \binom{k+d-1}{k}^{-n/k} = O\left(k^{-\frac{dn}{k}}\right)\rightarrow 1 \text{ for } k\rightarrow \infty. \label{eq:dk_convergence}
\end{align}

Therefore the hierarchy is convergent for finite dimension $d$ and number of parties $n$.
\end{proof}

Note that the above observation can also be phrased as bounding the norm $\norm{\ket{F_k}}$, which is related to the acceptance probability in the context of the product test. In this formulation the result reads
\begin{align}\label{eq:boundProducttest}
    \Lambda^{2k}(\psi) \leq \norm{\ket{F_k}}^2
    \leq \frac{1}{(d_k)^{2k}} \Lambda^{2k}(\psi).
\end{align}

Note that the first hierarchy can also be formulated for operators, see Appendix~\ref{app:UPB}. In the case of positive operators one finds that the hierarchy also converges, where the proof works in the same way as above.

\section{The first hierarchy $\mathfrak{H}_1$: Numerical approach}\label{subsec:Herarchy1}

We demonstrate here how the bounds from first hierarchy can be calculated in practice, using the three-qubit case as an example.

\subsection{Calculating the norms of $\ket{F_k}$}

In the first hierarchy, we consider the vector 
$\ket{F_k} = \Pi_k\otimes \Pi_k\otimes \Pi_k [\ket{\psi}^{\otimes k}]$. 
As this vector is symmetric on every subsystem A, B, and C, it can be written as 
$\ket{F_k} = \sum_{ijl} c_{ijl} \ket{D_k^i}\ket{D_k^j}\ket{D_k^l}$, 
where 
$\ket{D_k^i} = \frac{1}{N_i} \sum_{\pi \text{ perm.}} \pi(\ket{0}^{\otimes i}\otimes \ket{1}^{\otimes (k-i)})$ are the Dicke states. The sum runs over all distinct permutations of $\ket{0}^{\otimes i}\otimes \ket{1}^{\otimes (k-i)}$ and the normalization constant is then given by $N_i^2=\binom{k}{i}$. As these states form a basis, we need effectively only to compute the overlaps 
$c_{ijk} = |\bra{\psi}^{\otimes k}\ket{D_k^i}\ket{D_k^j}\ket{D_k^l} |$. 

We start by considering a single qubit state $\ket{\phi}$ and the coefficient $\braket{\phi^{\otimes k}}{D_k^i}$. Noticing that the Schmidt decomposition of a Dicke state for the bipartite cut between one party and the rest is given by
$\ket{D_k^i} = \sqrt{\frac{i}{k}} \ket{0}\otimes \ket{D_{k-1}^{i-1}} + \sqrt{\frac{k-i}{k}} \ket{1}\otimes \ket{D_{k-1}^{i}}$
we directly find the iterative rule
\begin{align}
    \braket{\phi^{\otimes k}}{D_k^i} = 
    \sqrt{\frac{i}{k}} \braket{\psi}{0} \braket{\phi^{\otimes k-1}}{D_{k-1}^{i-1}}
    + \sqrt{\frac{k-i}{k}} \braket{\psi}{1} \braket{\phi^{\otimes k-1}}{D_{k-1}^{i}}.
\end{align}
Using this rule one may calculate the coefficients of $\ket{\phi}^{\otimes k}$ with respect to the Dicke basis iteratively while only considering vectors of length up to $k+1$, reducing the computational effort drastically. 
This approach generalizes straightforwardly to the multipartite case of $n$ parties, where one then has to consider only vectors of length up to $(k+1)^n$.
Finally, the norm $\norm{\ket{F_k}}$ can then be directly calculated by $\sum_{i,j,l} |c_{ijl}|^2$.

\subsection{Including PPT in the hierarchies}

As noted in the main text, the three different hierarchies give upper bounds on the geometric measure by relaxing the optimization over product states to arbitrary states. One may, however, retain some of the structure of the product states in the optimization by considering states which have a positive partial transpose (PPT) with respect to a (or every) bipartition, which leads to improved bounds. As an example, we demonstrate here how this approach can be implemented in the first hierarchy $\mathfrak{H}_1$.

In the first hierarchy, we have for $k=2$
\begin{align}
\Lambda^4(\psi)  
= \max_{\ket{abc}} |\bra{abc}^{\otimes 2}\ket{F_2}|^2
=
\max_{\sigma \in \text{SEP}} \Tr[\ketbra{F_2}{F_2} \sigma]
\leq 
\max_{\sigma \in \text{SEP}(S|\bar{S})} \Tr[\ketbra{F_2}{F_2} \sigma]
\leq \Vert \ket{F_2} \Vert^2
\end{align}
with $\ket{F_2} = \Pi_2\otimes \Pi_2\otimes \Pi_2 [\ket{\psi}^{\otimes 2}]$, where $\text{SEP}$ and $\text{SEP}(S|\bar{S})$ denote all fully separable states and all states which are separable with respect to a fixed bipartition $S|\bar{S}$ of the two-copy space $A_1B_1C_1A_2B_2C_2$. For a bipartite state the maximal achievable overlap with a separable state is given by the largest Schmidt coefficient. Note, however, that the vector $\ket{F_2}$ is not normalized, so instead we find as an upper bound $\Tr[\ketbra{F_2}{F_2} \sigma]\leq s_{\max}^{S|\bar{S}}$ for $\sigma$ separable across the bipartition, where $s_{\max}^{S|\bar{S}}$ denotes the largest singular value of the reduced density matrix $\Tr_{\bar{S}}[\ketbra{F_2}{F_2}]$.

These arguments generalize to the case of multiple parties and copies, leading to various possible bipartitions one could consider for an upper bound on $\Lambda(\psi)$.
In the examples considered in this paper, we implemented two different fixed bipartitions. The first is given by considering the cut between the first copy and all other copies, while the second describes the cut between the first half and the second half of the $k$ copies. As the calculation of the largest singular value depends on the size of the reduced density matrix across the cut, the second bipartition is numerically more demanding than the first. However, the second bipartition leads typically to better upper bounds than every other bipartition for randomly chosen states.

We show here shortly how the Schmidt coefficients can be calculated in practice for three-qubit states for the bipartition between the first copy and the remaining copies. Note that the Friedland state $\ket{F_k}$ can be written as $\ket{F_k} = \sum_{ijl} c_{ijl} \ket{D_k^i D_k^j D_k^l}$. The Schmidt coefficient is then given by the maximal eigenvalue of $\Tr_{A_2\dots A_k B_2\dots B_k C_2\dots C_k}[\ketbra{F_k}{F_k}]$. For the Dicke states the reduced density matrix can be calculated directly as 
\begin{align}
    \Delta(k;\alpha,\beta) := \Tr_{2,\dots,k}\left(\ketbra{D_k^\alpha}{D_k^\beta}\right) = \frac{1}{n}\begin{pmatrix}
        \alpha \delta_{\alpha,\beta} & \sqrt{\alpha(k-\beta)} \delta_{\alpha,\beta+1} \\
        \sqrt{\beta (k-\alpha)} \delta_{\alpha,\beta-1} & (k-\alpha) \delta_{\alpha,\beta}
    \end{pmatrix},
\end{align}
leading to 
\begin{align}
    \Tr_{A_2\dots A_k B_2\dots B_k C_2\dots C_k}[\ketbra{F_k}{F_k}] = \sum_{ijl,\alpha\beta\gamma} c^*_{ijl} c_{\alpha\beta\gamma} \Delta(k;i,\alpha) \otimes \Delta(k;j,\beta) \otimes \Delta(k;l,\gamma).
\end{align}
This also is generalizable to more parties, thus, the calculation of the Schmidt coefficients from given coefficients $c_{ij\dots}$ only amounts to the eigenvalue decomposition of a $2^n\times 2^n$ matrix. The computational important step is the construction of this matrix from the coefficients $c_{ij\dots}$, as the number of these grows as $k^n$. In practice, this scheme was for $k\leq 25$ in comparison not much slower than calculating the norm, while leading to a better upper bound.

\section{The second hierarchy $\mathfrak{H}_2$: Numerical approach}
\label{app:trees}
The trees that arise in the second hierarchy can be interpreted as tree tensor networks (TTNs) \cite{montangero2018introduction}, whose individual tensors are copies 
of the state $\ket{\psi}$ under consideration. Two examples for four particles and four copies are given in Fig.~\ref{fig:example_trees}.

\begin{figure}[t]
    \centering
        \begin{minipage}{0.25\textwidth}
        \centering
        \begin{adjustbox}{clip,trim=0cm 0.0cm 0cm 0.0cm,width=\linewidth}
            \begin{tikzpicture}[
  scale=0.5,
  every node/.style={transform shape},
  >=latex,
  tensor/.style={
    circle,
    draw=black!70,
    line width=0.9pt,
    minimum size=9mm,
    shade,
    top color=blue!15,
    bottom color=blue!55,
    preaction={fill=black,opacity=0.15,
      transform canvas={shift={(0.9pt,-0.9pt)}}}
  },
  leg/.style={line width=0.9pt}
]

\coordinate (C)  at ( 0.0,  0.0);   
\coordinate (T1) at ( 1.3,  1.3);   
\coordinate (T2) at (-1.3,  1.3);   
\coordinate (T3) at (-1.3, -1.3);   
\coordinate (T4) at ( 1.3, -1.3);   

\draw[leg] (C) -- node[pos=0.5, yshift=8pt] {$2$} (T1);
\draw[leg] (C) -- node[pos=0.5, yshift=8pt] {$1$} (T2);
\draw[leg] (C) -- node[pos=0.5, yshift=8pt] {$3$} (T3);
\draw[leg] (C) -- node[pos=0.5, yshift=8pt] {$4$} (T4);

\draw[leg] (T1) -- node[pos=1, xshift=-4pt, yshift=4pt,  font=\scriptsize] {$3$} ++(-0.9, 0.9); 
\draw[leg] (T1) -- node[pos=1, xshift= 4pt, yshift=4pt,  font=\scriptsize] {$4$} ++( 0.9, 0.9); 
\draw[leg] (T1) -- node[pos=1, xshift= 4pt, yshift=-4pt, font=\scriptsize] {$1$} ++( 0.9,-0.9); 

\draw[leg] (T2) -- node[pos=1, xshift=-4pt, yshift=4pt,  font=\scriptsize] {$3$} ++(-0.9, 0.9); 
\draw[leg] (T2) -- node[pos=1, xshift= 4pt, yshift=4pt,  font=\scriptsize] {$4$} ++( 0.9, 0.9); 
\draw[leg] (T2) -- node[pos=1, xshift=-4pt, yshift=-4pt, font=\scriptsize] {$2$} ++(-0.9,-0.9); 

\draw[leg] (T3) -- node[pos=1, xshift=-4pt, yshift=4pt,  font=\scriptsize] {$2$} ++(-0.9, 0.9); 
\draw[leg] (T3) -- node[pos=1, xshift=-4pt, yshift=-4pt, font=\scriptsize] {$3$} ++(-0.9,-0.9); 
\draw[leg] (T3) -- node[pos=1, xshift= 4pt, yshift=-4pt, font=\scriptsize] {$4$} ++( 0.9,-0.9); 

\draw[leg] (T4) -- node[pos=1, xshift= 4pt, yshift=4pt,  font=\scriptsize] {$1$} ++( 0.9, 0.9); 
\draw[leg] (T4) -- node[pos=1, xshift=-4pt, yshift=-4pt, font=\scriptsize] {$3$} ++(-0.9,-0.9); 
\draw[leg] (T4) -- node[pos=1, xshift= 4pt, yshift=-4pt, font=\scriptsize] {$2$} ++( 0.9,-0.9); 

\node[tensor] at (C)  {};
\node[tensor] at (T1) {};
\node[tensor] at (T2) {};
\node[tensor] at (T3) {};
\node[tensor] at (T4) {};

\end{tikzpicture}
        \end{adjustbox}
    \end{minipage}
    \hspace{10mm}
    \begin{minipage}{0.18\textwidth}
        \centering
        \begin{adjustbox}{clip,trim=0cm 0cm 0cm 0cm,width=\linewidth}
\begin{tikzpicture}[
  scale=0.5,
  every node/.style={transform shape},
  >=latex,
  tensor/.style={
    circle,
    draw=black!70,
    line width=0.9pt,
    minimum size=9mm,
    shade,
    top color=blue!15,
    bottom color=blue!55,
    preaction={fill=black,opacity=0.15,
      transform canvas={shift={(0.9pt,-0.9pt)}}}
  },
  leg/.style={line width=0.9pt}
]

\coordinate (T1) at (-0.8,  0.0);
\coordinate (T2) at ( 1.0, -1.6);
\coordinate (T3) at (-0.8, -3.2);
\coordinate (T4) at ( 1.0, -4.8);

\draw[leg] (T1) -- node[pos=0.5, yshift=8pt] {$1$} (T2);
\draw[leg] (T2) -- node[pos=0.5, yshift=8pt] {$3$} (T3);
\draw[leg] (T3) -- node[pos=0.5, yshift=8pt] {$2$} (T4);

\draw[leg] (T1) -- node[pos=1, xshift=-4pt, yshift=4pt, font=\scriptsize] {$3$} ++(-0.9, 0.9); 
\draw[leg] (T1) -- node[pos=1, xshift= 4pt, yshift=4pt, font=\scriptsize] {$2$} ++( 0.9, 0.9); 
\draw[leg] (T1) -- node[pos=1, xshift=-4pt, yshift=-4pt, font=\scriptsize] {$4$} ++(-0.9,-0.9); 

\draw[leg] (T2) -- node[pos=1, xshift= 4pt, yshift=4pt, font=\scriptsize] {$2$} ++( 0.9, 0.9); 
\draw[leg] (T2) -- node[pos=1, xshift= 4pt, yshift=-4pt, font=\scriptsize] {$4$} ++( 0.9,-0.9); 

\draw[leg] (T3) -- node[pos=1, xshift=-4pt, yshift=4pt, font=\scriptsize] {$1$} ++(-0.9, 0.9); 
\draw[leg] (T3) -- node[pos=1, xshift=-4pt, yshift=-4pt, font=\scriptsize] {$4$} ++(-0.9,-0.9); 

\draw[leg] (T4) -- node[pos=1, xshift= 4pt, yshift=4pt, font=\scriptsize] {$3$} ++( 0.9, 0.9); 
\draw[leg] (T4) -- node[pos=1, xshift= 4pt, yshift=-4pt, font=\scriptsize] {$4$} ++( 0.9,-0.9); 
\draw[leg] (T4) -- node[pos=1, xshift=-4pt, yshift=-4pt, font=\scriptsize] {$1$} ++(-0.9,-0.9); 

\node[tensor] at (T1) {};
\node[tensor] at (T2) {};
\node[tensor] at (T3) {};
\node[tensor] at (T4) {};

\end{tikzpicture}
        \end{adjustbox}
    \end{minipage}
    \caption{Examples of graphs corresponding to the second hierarchy for 
    $n=4$ particles. (Left). First non-trivial level $l=1$ of the finite Bethe lattice, where an initial tensor is contracted with $n$ copies. Subsequent levels of the Bethe lattice contract each of the tensors in the `outer' layer with $n-1$ other copies. (Right) Example tree tensor network of a path graph labeled by $\left[1, 3, 2\right]$, corresponding to the indices connecting the tensors from top to bottom. To calculate the associated bound for a given tree tensor network, we apply symmetric projectors on each of the open legs of a given type, and then contract the tensor with itself. Note that the symmetric projectors can be of different sizes; the tree tensor network on the right requires a symmetric projector on $2$ copies for the leg of type $1$, while requiring a symmetric projector on $4$ copies for the leg of type $4$.}
    \label{fig:example_trees}
\end{figure}
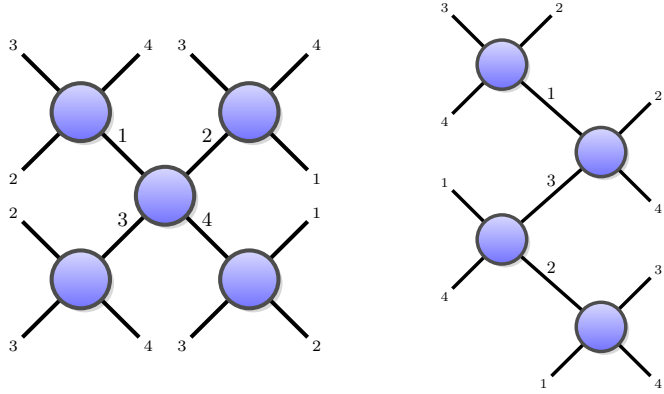

The first example shown is a finite Bethe lattice. The infinite Bethe lattice is the infinite $n$-regular tree, while the finite variant of level $l$ arises from the infinite one by choosing a root vertex and taking the induced subgraph on the vertices that are a distance at most $l$ from that root. The Bethe lattice is natural, since it does not favor any of the particles over the other. However, in practice we found that the bounds found through Bethe lattice TTNs were rather weak. Furthermore, the number of tensors grows exponentially in the parameter $l$, making runtime prohibitive for $l>2$.

In practice we found that the second example of path graph TTNs worked well in practice. However, unlike the Bethe lattice, for each level $k$ there are different ways to connect the labeled legs for a given path graph. A path graph TTN can then be specified by the sequence of legs that get connected. For example, the path graph TTN in Fig.~\ref{fig:example_trees} is specified by $\left[1, 3, 2\right]$, when starting from the top tensor. The different choices for fixed $k$ yielded similar results, but sequences of the form $\left[1, 2, \ldots, n, 1, 2,\ldots, n, 1, \ldots\right]$ worked well in practice.

For completeness, we show the bounds for different path graph TTNs. In particular, we consider superpositions of Dicke states with single and double excitations, i.e.~$\sqrt{s}\ket{D_{1}^{(5)}} + \sqrt{1-s}\ket{D_{2}^{(5)}}$. We observe here once more that the bounds are less tight for the more strongly entangled states.
\begin{figure}[h]
    \centering
    \includegraphics[width=0.5\linewidth]{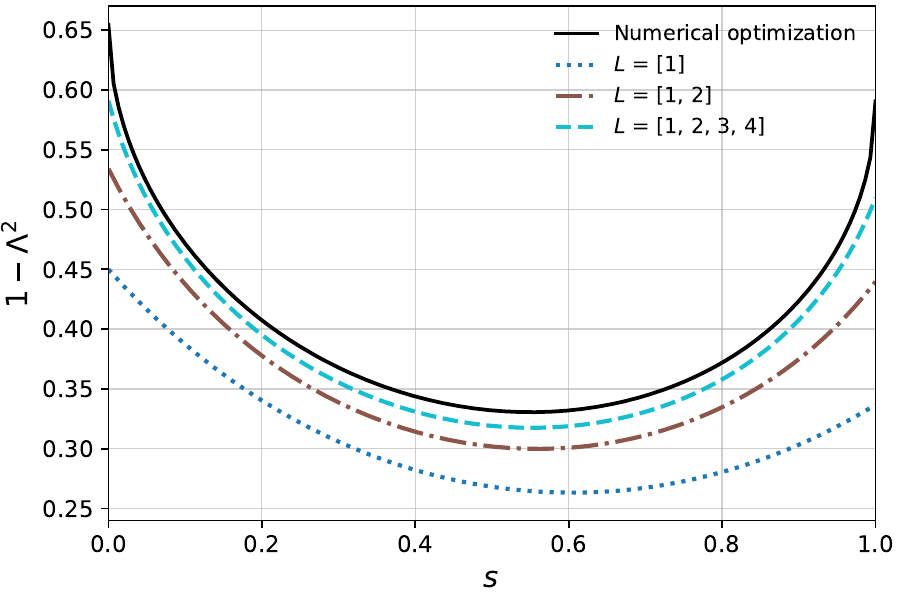}
\caption{Lower bounds on the geometric measure of superpositions of Dicke states of the form $\sqrt{s}\ket{D_{1}^{(5)}} + \sqrt{1-s}\ket{D_{2}^{(5)}}$. The lower bounds are found for three path graph TTNs parametrized by $L = \left[1\right]$, $L = \left[1, 2\right]$, $L = \left[1, 2, 3, 4\right]$. We also show a numerical upper bound, found by a brute-force search over a discretization of all symmetric product states.}
    \label{fig:Dicke_1vs2}
\end{figure}

\section{The second hierarchy 
$\mathfrak{H}_2$: proof of convergence}\label{app:proof_convergence_second_hierarchy}

A similar approach as to the first hierarchy can be used for the second hierarchy. In the following, we denote by $\ket{T}$ the tree tensor network (see Appendix \ref{app:trees}), by $k(i)$ the number of open legs of a given type $i$ in the tree $T$, by $\Pi_{k(i)}$ the projector onto the symmetric subspace of the legs of type $i$, and by $k$ the total number of vertices/`copies' in $T$. We then have the following.

\begin{theorem}
Given an $n$-partite state $\ket{\psi}$ and any associated $\ket{T}$ with $k$ vertices, and $k(i)$ open legs of each type $i$, we have the two-sided bound
\begin{equation}
    d_k \cdot ||\bigotimes_{i=1}^{n}\Pi_{k(i)} \ket{T}||^{2/k}
    \leq 
    \Lambda^2(\psi)
    \leq
    ||\bigotimes_{i=1}^{n}\Pi_{k(i)} \ket{T}||^{2/k},
\end{equation}
for any $k$. The coefficient $d_k:=\left(\prod_{i=1}^n \binom{k(i)+d-1}{k(i)}\right)^{-1/k}<1$ 
converges to one, i.e.~$\lim_{k\rightarrow\infty} d_k =1$. 
\end{theorem}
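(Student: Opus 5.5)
The plan mirrors the proof of Observation~\ref{obs:convergence}; the one new ingredient is an identity expressing the contraction of the tree network $\ket{T}$ with product states. The first step is to establish this identity. Let $\ket{a_1\dots a_n}$ be a product state, and put $\ket{a_i}^{\otimes k(i)}$ on all open legs of type $i$. We claim that
\begin{equation}
\Big(\bigotimes_{i=1}^n \bra{a_i}^{\otimes k(i)}\Big)\ket{T} = \Big(\prod_{j} \bra{\psi} M_j\Big)\cdots
\end{equation}
More concretely, the claim is that this contraction equals $\braket{a_1\dots a_n}{\psi}^{\,\cdot}$ times factors whose moduli are controlled by $\Lambda$. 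The cleanest route is to show directly that
\begin{equation}
\max_{\ket{a_1\dots a_n}} \Big|\Big(\bigotimes_i \bra{a_i}^{\otimes k(i)}\Big)\ket{T}\Big| = \Lambda^{k}(\psi).
\end{equation}
For the upper bound we use induction on the tree, removing a leaf vertex $v$. The leaf is a copy of $\psi$ with $n-1$ open legs, contracted with the product vectors $\bra{a_j}$, $j\neq i$, and one bond of type $i$. Contracting those open legs gives a vector $\ket{\alpha}$ on the bond with $\|\ket{\alpha}\|\le\Lambda$, since $\|\ket{\alpha}\| = \max_{a_i}|\braket{a_1\dots a_n}{\psi}|$. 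Writing $\ket{\alpha}=\|\alpha\|\ket{\hat a}$, the remaining tree has one more open leg of type $i$ carrying the normalized $\ket{\hat a}$ instead of $\ket{a_i}$. To keep the induction working, the hypothesis is therefore generalized to arbitrary normalized product vectors on each open leg, not necessarily equal within a type. This generalized bound $\le\Lambda^{k}$ still follows by the same leaf-stripping, with base case $k=1$ being the definition of $\Lambda$.

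For attainability we plug in the optimal $\ket{a^o_1\dots a^o_n}$. Because $\ket{a^o}$ is optimal, fixing all but party $i$ gives $(\bigotimes_{j\ne i}\bra{a^o_j})\ket{\psi}=\Lambda e^{i\theta}\ket{a^o_i}$, which is the stationarity argument around Eq.~\eqref{eq:2nd_hierarchy_example}. Stripping leaves then reproduces $\ket{a^o_i}$ on each bond, so the contraction has modulus exactly $\Lambda^k$.

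With the identity in hand, the right inequality follows exactly as in Observation~\ref{obs:convergence}. Since $\bigotimes_i\Pi_{k(i)}$ acts trivially on $\bigotimes_i\ket{a_i}^{\otimes k(i)}$, we get
\begin{equation}
\Lambda^k=\max_{a}\Big|\Big(\bigotimes_i\bra{a_i}^{\otimes k(i)}\Big)\bigotimes_i\Pi_{k(i)}\ket{T}\Big|\le\Big\|\bigotimes_i\Pi_{k(i)}\ket{T}\Big\|,
\end{equation}
where the last step relaxes the maximum to arbitrary normalized $\ket{\Phi}$.

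For the left inequality we insert $\Pi_{k(i)}=\binom{k(i)+d-1}{k(i)}\int\ketbra{\phi}{\phi}^{\otimes k(i)}d\phi$ for each type. Then
\begin{equation}
\Big\|\bigotimes_i\Pi_{k(i)}\ket{T}\Big\|^2=\prod_i\binom{k(i)+d-1}{k(i)}\int\Big|\Big(\bigotimes_i\bra{\phi_i}^{\otimes k(i)}\Big)\ket{T}\Big|^2 d\phi_1\cdots d\phi_n.
\end{equation}
The integrand is bounded by $\Lambda^{2k}$ by the identity, and the measure has total mass one. Taking the $k$-th root gives $\|\cdot\|^{2/k}\le d_k^{-1}\Lambda^2$, which is the left inequality.

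For convergence, note that each $\binom{k(i)+d-1}{k(i)}$ is at most polynomial, of order $k(i)^{d-1}\le k^{d-1}$. Hence $d_k^{-1}\le \big(C k^{n(d-1)}\big)^{1/k}\to1$ as $k\to\infty$. This is the same estimate as Eq.~\eqref{eq:dk_convergence}, using only that $k(i)\le (n-1)k$ is at most linear in $k$. The strict inequality $d_k<1$ holds whenever some $k(i)\ge1$ and $d\ge2$.

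The main obstacle is the first step, namely the bound by $\Lambda^k$ on tree contractions. The key point is to strengthen the induction hypothesis to allow distinct normalized vectors on every open leg. The bond-type bookkeeping, where each vertex has exactly one leg of each type with some of them used as bonds, needs care, but acyclicity makes the leaf-removal well defined.
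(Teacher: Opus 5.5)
Your proposal is correct and follows essentially the same route as the paper: leaf-stripping at the optimal product state, using $(\bigotimes_{j\neq i}\bra{a_j})\ket{\psi}=\Lambda e^{i\theta}\ket{a_i}$, for the right inequality; the Haar-integral representation of each $\Pi_{k(i)}$ for the left inequality; and polynomial growth of the binomials for $d_k\to 1$ (each vertex carries exactly one leg of type $i$, so $k(i)\le k$ is the natural bound rather than $(n-1)k$). Your strengthened induction, with arbitrary normalized vectors on every open leg, makes explicit the bound $|\langle\phi_1^{\otimes k(1)}\cdots\phi_n^{\otimes k(n)}|T\rangle|\le\Lambda^k$ on the integrand, which the paper uses only implicitly when it replaces the integrand by its value at the optimal product state.
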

\begin{proof}
Let $\ket{a_1\ldots a_n}$ be the closest separable state to $\ket{\psi}$. Since by definition $\left|\braket{a_1\ldots a_n}{\psi}\right|=\Lambda$, we have that $\braket{\psi}{a_2\ldots a_n} = \Lambda \ket{a_1}$ up to some phase. Similar statements hold for the other $\ket{a_i}$ as well.

As such, we can rewrite the norm as

\begin{align}
\Lambda(\psi)^k&=\max_{a_1,\ldots, a_n \in \textrm{SEP}} \left|\bra{a_1}^{\otimes k(1)}\cdots \bra{a_n}^{\otimes k(n)}\ket{T}\right|\\
&=\max_{a_1,\ldots, a_n \in \textrm{SEP}} \left|\bra{a_1}^{\otimes k(1)}\cdots \bra{a_n}^{\otimes k(n)}\bigotimes_{i=1}^{n}\Pi_{k(i)}\ket{T}\right|\\
&\leq ||\bigotimes_{i=1}^{n}\Pi_{k(i)}\ket{T}||\ .
\end{align}
In the first step, we used that every leaf $v$ of $T$ has a unique leg that is not open. This implies that $\braket{a_1\ldots a_{j-1}a_{j+1}\ldots a_n}{T}=\Lambda\ket{T'}$ (up to a phase). Here $j$ is the type of the non-open leg of $v$, and $T'$ is the tree tensor network $T'$ but with $v$ deleted. By removing leaves recursively, the claim follows. In the last line we use that the maximization can be upper bounded by the norm. This proves the upper bound.

For the lower bound, we use that 
\begin{align}
||\bigotimes_{i=1}^{n}\Pi_{k(i)} \ket{T}||^2 &= \left(\prod_{i=1}^n \binom{k(i)+d-1}{k(i)}\right)\int_{\phi_1,...,\phi_n\in \mathrm{Haar}}\left \lVert\braket{\phi_1^{\otimes k(1)}\ldots\phi_n^{\otimes k(n)}}{T} \right \rVert^2 d\phi_1...d\phi_n \nonumber \\
&\leq  \left(\prod_{i=1}^n \binom{k(i)+d-1}{k(i)} \right)~\left \lVert\braket{a_1^{\otimes k(1)}\ldots a_n^{\otimes k(n)}}{T} \right \rVert^2
\nonumber\\
&=  \left(\prod_{i=1}^n \binom{k(i)+d-1}{k(i)}\right) ~\Lambda^{2k}(\psi)\ .
\end{align}

As before, it now remains to show that
\begin{align}
    \left(\prod_{i=1}^{n}\binom{k(i)+d-1}{k(i)}\right)^{1/k} &\leq \binom{k+d-1}{d-1}^{n/k}
\quad\substack{\rightarrow\\ k\rightarrow\infty}\quad 1\,.
\end{align}
where in the first step we used that $\binom{k(i)+d-1}{k(i)} = \binom{k(i)+d-1}{d-1} \leq \binom{k+d-1}{d-1}$ since $k(i)\leq k$, and in the second step we used the same argument as in Eq.~\eqref{eq:dk_convergence}.
\end{proof}

\section{Relation to the works by H.~A.~Helfgott and S.~Friedland}\label{app:connection_helfgott_friedland}

This present work is inspired by the work of S.~Friedland \cite{friedland2021upperboundsspectralnorm} which is itself 
a comment on a question by H.~A.~Helfgott \cite{helfgott}. So, 
in this Appendix, we want to explain this background and shortly 
show the connections to these works.

Interpreting a bipartite state $\ket{\psi}$ as a matrix $\psi_{ij}$ we 
find that $\max_{\ket{ab}}|\braket{\psi}{ab}|$ exactly amounts to the 
injective matrix norm $\norm{\psi}_{\rm inj}=\max_{a,b} |\sum_{ij}\psi_{ij}a_ib_j|$.
Note that for matrices this also equals the largest singular value or Ky-Fan norm. 
The definition directly generalizes to the tensor case, where the maximization depends on the chosen underlying field, which can be 
either real or complex. As an example consider the three-qubit state 
$\ket{B} = \frac{1}{2}(\ket{001}+\ket{010}+\ket{100}-\ket{111})$, which has $\norm{\psi}_{\rm inj, \mathbb{R}} = \frac{1}{2} < \frac{1}{\sqrt{2}} = \norm{\psi}_{\rm inj, \mathbb{C}}$ \cite{friedland2018nuclear}.

We now consider the real case first. In the case of real symmetric matrices 
it follows from the eigenvalue decomposition that 
\begin{align}
\norm{\psi}_{\rm inj}^k \leq \Tr[\psi^k].
\end{align}
This can be interpreted in a combinatorial way, as $\Tr[\psi^k]$ counts 
the weight of all closed walks of length $k$ in a weighted directed graph with adjacency 
matrix $\psi$, see for example Theorem 2.13 of \cite{chartrand2013first}. Note also that one 
can rewrite $\Tr[\psi^k]$ for symmetric $\psi$ using multiple copies of 
$\psi$ in the following way. For instance, for 
$k=3$ one has 
$\Tr[\psi^k]
= \sum_{ijrs} \psi_{ir}\psi_{rs}\psi_{sj}
= \sum_{kp} \big(\sum_{lmno} \delta_{ln} \delta_{mo} \psi_{kl} \psi_{mn} \psi_{op}
\big)
$, which bears some resemblance to the construction of the second hierarchy
in the main text.

In Ref.~\cite{friedland2021upperboundsspectralnorm} mainly real symmetric 
tensors were considered. For them, two hierarchies were constructed, 
in analogy to $\mathfrak{H}_1$ and $\mathfrak{H}_2$ in the main text. 
For $\mathfrak{H}_2$, only graphs where the underlying tree is a regular 
tree (also called Bethe lattice) were used. {These graphs can be constructed by taking a "root node", and connect it to $n$ nodes in the first layer. In the second layer one connects each of the $n$ nodes to $n-1$ new nodes, and repeats this scheme for every further layer, see also Fig.~\ref{fig:example_trees} and the description in Appendix~\ref{app:trees}}.  Note that in the case of 
symmetric tensors, the optimizing vectors in the injective norm can be 
assumed to be symmetric, i.e., $\norm{\psi}_{\rm inj}=\max_{a} \braket{\psi}{aa\dots}$ \cite{Hormander1954,hubener2009geometric}. 

It was then shown that in the real case the first hierarchy converges 
$\lim_{k\rightarrow \infty}\norm{\ket{F_k}}^{1/k} =: \rho_{1,\mathbb{R}}(\psi) \geq 
\norm{\psi}_{\rm inj,\mathbb{R}}$ for symmetric tensors, and the second hierarchy 
likewise converges to a finite value $\rho_2(\psi)\geq \norm{\psi}_{\rm inj,\mathbb{R}}$. 
In both cases equality holds for orthogonally diagonalizable tensors, 
i.e., tensors of the form 
\begin{align}
    \ket{\phi} = \sum_{i}\lambda_i \ket{a_ia_i\dots},
\end{align}
where the $\ket{a_i}$ form local orthonormal bases.

Our results now demonstrate that in the complex case (which was also 
shortly mentioned in \cite{friedland2021upperboundsspectralnorm}) 
$\rho_{1,\mathbb{C}}(\psi) = \rho_{2,\mathbb{C}}(\psi) = \norm{\psi}_{\rm inj,\mathbb{C}}$.

Additionally, our results also seem to indicate that the value $\rho_{1,\mathbb{R}}(\psi)$ is exactly given by the complex injective tensor norm $\norm{\psi}_{\rm inj,\mathbb{C}}$ for any real symmetric tensor $\psi$. This follows from the fact that in the first hierarchy we only compute the norm of the symmetrized tensor $\psi_{ijk}$, where in the case of a real tensor all coefficients are real, leading therefore to the same value in both the real and complex hierarchy. Since we show that the complex hierarchy converges exactly to the complex injective tensor norm, the same should hold for the real hierarchy. Thus, we find the seemingly strange phenomenon that a hierarchy defined on real symmetric tensors converges to their complex norm, i.e., 
$\lim_{k\rightarrow \infty}\norm{\ket{F_k}}^{1/k} = \rho_{1,\mathbb{R}}(\psi) = \norm{\psi}_{\rm inj,\mathbb{C}} \geq 
\norm{\psi}_{\rm inj,\mathbb{R}}$, where the inequality in the last step might be gapped for a given state, see the example of $\ket{B}$ above.

Finally, we would like to explain how our hierarchies can also be used to
find the injective tensor norm for the real case, which was the original problem
posed by H.~A.~Helfgott \cite{helfgott}. For that, we first note that the given
hierarchies allow to compute $\max_{\ket{abc}} |\bra{abc} X \ket{abc}|$ for complex
product vectors and complex, hermitean $X$, see also the main text. 

Now, consider a real three-particle state $\psi_{ijk}$. Using $X=\ketbra{\psi}{\psi}$ will just result in the complex injective tensor norm, so the trick is
to use a different $\tilde{X}$ for which the optimization gives the real injective
norm. Denoting $\rho=\ketbra{\psi}{\psi}=\rho^T$, we consider the real positive operator 
$\tilde X = (\rho + \rho^{T_A} + \rho^{T_B} + \rho^{T_C})/4$, which is invariant under every partial transposition. One therefore finds 
\begin{align}
    \max_{\ket{abc}} \bra{abc} \tilde X \ket{abc} = \max_{\ket{abc}} \Tr[\tilde X \ketbra{abc}{abc}]
    = \max_{\ket{abc}} \Tr[\tilde X \frac{\ketbra{a}{a}+\ketbra{a}{a}^T}{2} \otimes \frac{\ketbra{b}{b}+\ketbra{b}{b}^T}{2} \otimes \frac{\ketbra{c}{c}+\ketbra{c}{c}^T}{2} ],
\end{align}
where the last equality follows from the invariance of $\tilde X$ under each partial transposition. Naturally, $\rho_A=(\ketbra{a}{a}+\ketbra{a}{a}^T)/{2}=\rho_A^T = \rho_A^*$ is a real
single particle state, implying that the maximization above can also be understood as maximization over real product density matrices $\rho_A\otimes \rho_B \otimes \rho_C$. The eigenvalues and vectors of each real single particle state $\rho_X$
are real, too. So, due to convexity arguments, the maximum is attained for pure 
real product vectors, leading to 
\begin{align}
    \max_{\ket{abc}} \bra{abc} \tilde X \ket{abc} = \max_{\ket{abc}\text{ real}} \bra{abc} \tilde X \ket{abc}.
\end{align}
Consequently,
\begin{align}
    \norm{\psi}_{\rm inj,\mathbb{R}}^2 = \max_{\ket{abc}} \bra{abc} 
    \tilde X \ket{abc},
\end{align}
so the first and third hierarchies for $\tilde X$ converge to the real injective tensor norm of $\psi$.

\section{The third hierarchy
$\mathfrak{H}_3$: proof of convergence}\label{app:proof_convergence_third_hierarchy}

Let us first use the symmetric case to illustrate the idea. Let $\ket{\psi}$
be a symmetric state in $\mathcal{H}^{\otimes n}$ and
$\Lambda(\psi)=\max\abs{\braket{\psi}{{x^{\otimes n}}}}
=\max_{\rho\in \mathrm{SEP}\cap\mathrm{SYM}}\Tr(\ketbra{\psi}\rho)$,
where $\ket{x}\in\mathcal{H}$, where $\rho\in \mathrm{SEP}\cap\mathrm{SYM}$
means that $\rho$ is a separable state in the symmetric subspace.
Now, we take advantage of the quantum de Finetti theorem, or the so-called
symmetric extension, and approximate
$\rho$ with the reduced states of symmetric states $\rho_{(k+n)}$ in
$\mathcal{H}^{\otimes k+n}$.
Specifically, let $\xi_k$ be the solution of the following SDP
\begin{equation}
\begin{aligned}
    &\max_{\rho} \quad&&\Tr\qty[\qty(\id^{\otimes k}\otimes
    \ketbra{\psi}{\psi})\rho_{k+n}]\\
    &~\mathrm{s.t.} \quad && \Pi_{k+n} \rho_{k+n} \Pi_{k+n} = \rho_{k+n},\\
    & && \rho_{k+n}\ge 0,~\Tr(\rho_{k+n})=1,\\
\end{aligned}
\label{eq:SymmExt}
\end{equation}
then $\xi_k\ge\xi_{k+1}$ and
$\lim_{k\to\infty}\xi_k=\max\abs{\braket{\psi}{x^{\otimes n}}}$.
One can easily verify that Eq.~\eqref{eq:SymmExt} can be solved analytically as
\begin{equation}
\begin{aligned}
    \xi_k&=\lambda_{\max}\qty(\Pi_{k+n}
    \qty(\id^{\otimes k}\otimes\ketbra{\psi}{\psi})\Pi_{k+n})\\
    &=\lambda_{\max}\qty(\Pi_{k+n}
    \qty(\Pi_k\otimes\ketbra{\psi}{\psi})\Pi_{N+n}).
\end{aligned}
\label{eq:eigenbound}
\end{equation}

Generally, we consider the problem of optimizing
$\max_{\rho_{ABC}\in\mathrm{SEP}}\Tr(X_{ABC}\rho_{ABC})$,
where $X_{ABC}$ is some fixed Hermitian matrix.
In this case, we can still apply the symmetric extension but to each
subsystem individually, and the obtained hierarchy is still complete~\cite{doherty2005detecting}, i.e., the solutions $\xi_k$ of the
following SDPs give a complete hierarchy:
\begin{equation}
\begin{aligned}
    &\max_{\rho} \quad&&\Tr\qty[\qty(
    \id_A^{\otimes k-1}\otimes
    \id_B^{\otimes k-1}\otimes
    \id_C^{\otimes k-1}\otimes
    X_{ABC})\rho_k]\\
    &~\mathrm{s.t.} \quad &&
    \Pi_k^A\otimes\Pi_k^B\otimes\Pi_k^C \rho_k
    \Pi_k^A\otimes\Pi_k^B\otimes\Pi_k^C = \rho_k,\\
    & && \rho_k\ge 0,~\Tr(\rho_k)=1,\\
\end{aligned}
\label{eq:SymmExtGen}
\end{equation}
where $\rho_k\in\mathcal{H}_A^{\otimes k}\otimes
\mathcal{H}_B^{\otimes k}\otimes\mathcal{H}_C^{\otimes k}$ and
$\Pi_k^A$, $\Pi_k^B$, and $\Pi_k^C$ are symmetric projectors on 
$\mathcal{H}_A^{\otimes k}$, $\mathcal{H}_B^{\otimes k}$, and
$\mathcal{H}_C^{\otimes k}$, respectively.
Similarly, one can easily verify that Eq.~\eqref{eq:SymmExtGen} can be solved
analytically as
\begin{equation}
\begin{aligned}
    \xi_k&=\lambda_{\max}\qty[
    \Pi_k^A\otimes\Pi_k^B\otimes\Pi_k^C 
    \qty(\id_A^{\otimes k-1}\otimes
    \id_B^{\otimes k-1}\otimes
    \id_C^{\otimes k-1}\otimes
    X_{ABC})
    \Pi_k^A\otimes\Pi_k^B\otimes\Pi_k^C 
    ]\\
    &=\lambda_{\max}\qty[
    \Pi_k^A\otimes\Pi_k^B\otimes\Pi_k^C 
    \qty(\Pi_A^{\otimes k-1}\otimes
    \Pi_B^{\otimes k-1}\otimes
    \Pi_C^{\otimes k-1}\otimes
    X_{ABC})
    \Pi_k^A\otimes\Pi_k^B\otimes\Pi_k^C 
    ].
\end{aligned}
\label{eq:eigenboundGen}
\end{equation}

\section{The third hierarchy $\mathfrak{H}_3$: numerical approach}
\label{app:third_numerical}

To show how the third hierarchy can be efficiently calculated, we take the
qubit case ($\dim(\mathcal{H})=2$) as an example. Still, we first consider the
symmetric case. For any $n$-qubit symmetric pure state $\ket{\psi}$, we have
\begin{align}
\ket{\psi}&=\sum_{j=0}^na_j\ket{D_n^j}, \\
\Pi_k&=\sum_{\alpha=0}^k\ketbra{D_k^\alpha}{D_k^\alpha},\\
\Pi_k\otimes\ketbra{\psi}{\psi}
&=\sum_{\alpha=0}^k\ketbra{D_k^\alpha\psi}{D_k^\alpha\psi},
\end{align}
where $\ket{D_k^i}$ are the Dicke states. Then,
\begin{equation}
\begin{aligned}
    \Pi_{k+n}\ket{D_k^\alpha\psi}&=\sum_{j=0}^na_j\Pi_{k+n}\ket{D_k^\alpha D_n^j}\\
    &=\sum_{j=0}^na_j\sqrt{\frac{\binom{k}{\alpha}\binom{n}{j}}
    {\binom{k+n}{\alpha +j}}}\ket{D_{k+n}^{\alpha +j}}\\
    &=\sum_{\ell=0}^{k+n}a_{\ell-\alpha}\mu_{\alpha\ell}\ket{D_{k+n}^{\ell}},
\end{aligned}
\end{equation}
where
\begin{equation}
    \mu_{\alpha\ell}:=\sqrt{\frac{\binom{k}{\alpha}\binom{n}{\ell-\alpha}}
    {\binom{k+n}{\ell}}},
    \label{eq:defmu}
\end{equation}
and we have assumed that as $a_{\ell-\alpha}=\mu_{\alpha\ell}=0$
for $\alpha>\ell$ or $\alpha<\ell-n$. This also implies that the 
$\mu_{\alpha\ell}$ form a sparse matrix. Let
$M_k$ be the $(k+n+1)\times (k+n+1)$ matrix defined by
\begin{equation}
    [M_k]_{\ell m}=\sum_{\alpha=0}^k
    a_{\ell-\alpha}a^*_{m-\alpha}\mu_{\alpha\ell}\mu_{\alpha m}.
\end{equation}
Eq.~\eqref{eq:eigenbound} implies
that $\xi_k=\lambda_{\max}(M_k)$, or equivalently, $\xi_k=s^2_{\max}(V_k)$,
where $s_{\max}(\cdot)$ denotes the largest singular value and $V_k$ is
a $(k+n+1)\times (k+1)$ sparse matrix defined by
\begin{equation}
    [V_k]_{\ell \alpha}=a_{\ell-\alpha}\mu_{\alpha\ell}.
\end{equation}

Similarly, for a general $N$-qubit state
$\ket{\psi}=a_{i_1i_2\dots i_N}\ket{i_1i_2\dots i_N}$, we obtain the
$(k+1)^N\times (k+1)^N$
matrix $M_k$ 
\begin{equation}
    [M_k]_{\ell_1\dots\ell_N;m_1\dots m_N}=
    \sum_{\alpha_1,\dots,\alpha_N=0}^{k-1}
    a_{\ell_1-\alpha_1,\dots,\ell_N-\alpha_N}
    a_{m_1-\alpha_1,\dots,m_N-\alpha_N}^*
    \mu_{\alpha_1\ell_1}\mu_{\alpha_1m_1}
    \dots
    \mu_{\alpha_N\ell_N}\mu_{\alpha_Nm_N},
\end{equation}
where
\begin{equation}
    \mu_{\alpha\ell}=
    \begin{cases}
    \sqrt{\binom{k-1}{\alpha}/
    \binom{k}{\alpha}}=\sqrt{\frac{k-\alpha}{k}}\quad &\ell=\alpha,\\
    \sqrt{\binom{k-1}{\alpha}/
    \binom{k}{\alpha+1}}=\sqrt{\frac{\alpha+1}{k}}\quad &\ell=\alpha+1,\\
    0\quad &\text{others}.\\
    \end{cases}
\end{equation}
Then $\xi_k$ defined by
Eq.~\eqref{eq:SymmExtGen} is given by
$\xi_k=\lambda_{\max}(M_k)$,
or equivalently, $\xi_k=s^2_{\max}(V_k)$,
where $s_{\max}(\cdot)$ denotes the largest singular value and
$V_k$ is a $(k+1)^N\times k^N$ sparse matrix defined by
\begin{equation}
    [V_k]_{\ell_1\dots\ell_N;\alpha_1\dots \alpha_N}=
    a_{\ell_1-\alpha_1,\dots,\ell_N-\alpha_N}
    \mu_{\alpha_1\ell_1}
    \dots
    \mu_{\alpha_N\ell_N}.
\end{equation}

\section{Application to operators}\label{app:UPB}

The first and third hierarchies can also be formulated for the separable numerical range of an operator. As an example, we consider an unextendible product basis (UPB) on a two qutrit-state, given by~\cite{bennett1999unextendible},
\begin{align}
    \ket{\psi_0} &= \frac{1}{\sqrt{2}}\ket{0}(\ket{0}-\ket{1}), \\
    \ket{\psi_1} &= \frac{1}{\sqrt{2}}(\ket{0}-\ket{1})\ket{2}, \\
    \ket{\psi_2} &= \frac{1}{\sqrt{2}}\ket{2}(\ket{1}-\ket{2}), \\
    \ket{\psi_3} &= \frac{1}{\sqrt{2}}(\ket{1}-\ket{2})\ket{0} \text{  and} \\
    \ket{\psi_4} &= \frac{1}{3}(\ket{0}+\ket{1}+\ket{2}) (\ket{0}+\ket{1}+\ket{2}) .
\end{align}
An UPB consists of a maximal set of orthogonal product states, meaning that no further product state orthogonal to all UPB states exists. Considering the projector on the UPB $P_{\text{UPB}}= \sum_{i=0}^4 \ket{\psi_i}\bra{\psi_i}$, one finds then that the state $X_{\text{UPB}}\propto\id-P_{\text{UPB}}$ is entangled, since there exists no product state in its range. To construct an entanglement witness for this state one can consider $\mathcal{W}_{\text{UPB}} = P-\epsilon\id$, which should be positive on all product states \cite{guhne2003experimental,terhal2001family}, leading to
\begin{align}
    \bra{ab} W \ket{ab} = \bra{ab} P \ket{ab} - \epsilon \geq 0
    \Leftrightarrow  \bra{ab} P \ket{ab} \geq \epsilon \quad \forall \ket{ab}.
\end{align}
Thus, ideally one should choose $\epsilon = \min_{\ket{ab}} \bra{ab} P \ket{ab}$, or equivalently $\delta=1-\epsilon=\max_{\ket{ab}} \bra{ab} X_{\text{UPB}} \ket{ab} =M(X_{\text{UPB}})$. Formulating the first hierarchy we find therefore
\begin{align}
    \delta\leq \sqrt[k]{ \lambda_{max}\left( \Pi_k\otimes \Pi_k (X_{\text{UPB}})^{\otimes k} \Pi_k\otimes \Pi_k \right)}
\end{align}
while the third hierarchy in this case reads 
\begin{align}
    \delta\leq \lambda_{max}\left( \Pi_k\otimes \Pi_k X_{\text{UPB}}\otimes\id^{\otimes (k-1)} \Pi_k\otimes \Pi_k \right).
\end{align}

\section{Lower bounds on the convex roof for mixed states}\label{app:Mixed}

\begin{figure}[t]
\centering
\includegraphics[width=0.32\linewidth]{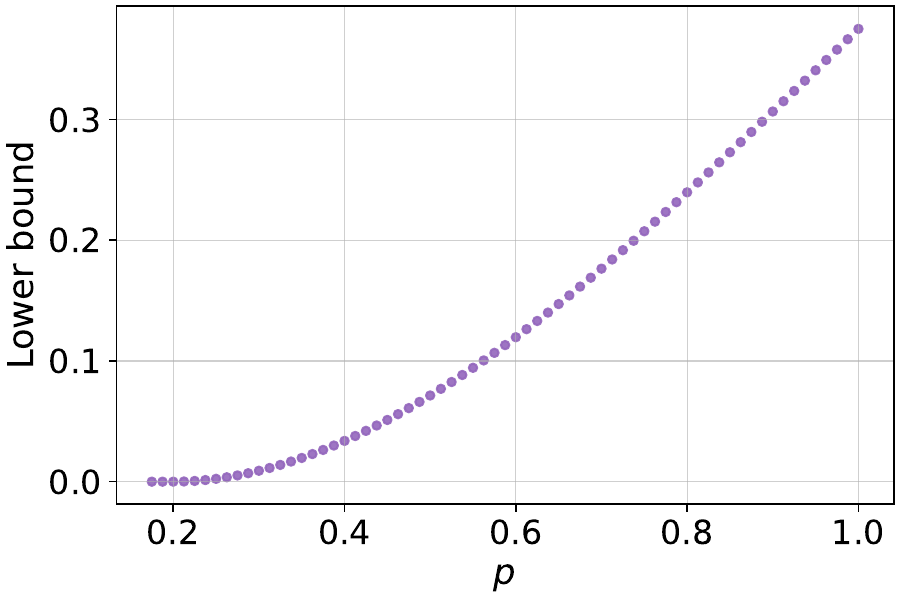}
\includegraphics[width=0.32\linewidth]{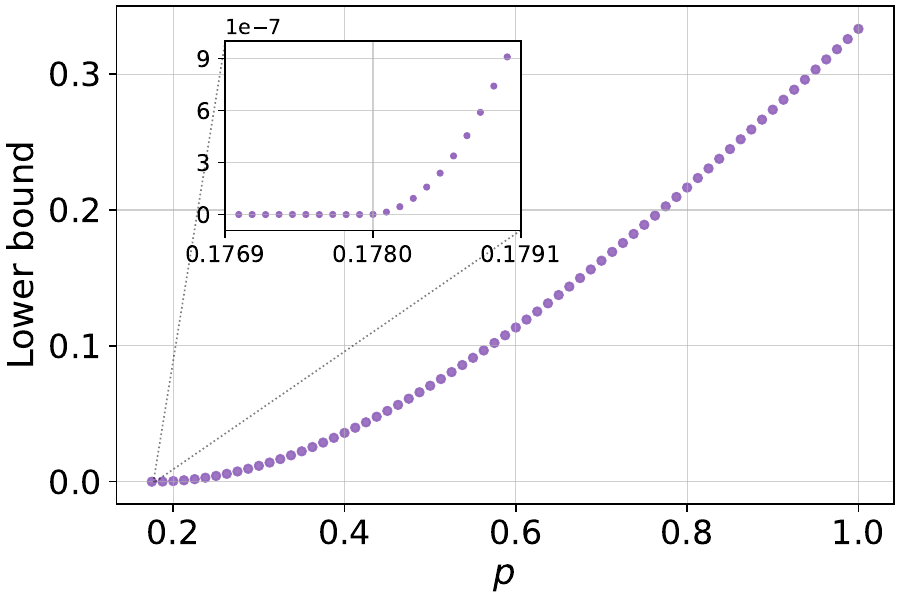}
\includegraphics[width=0.32\linewidth]{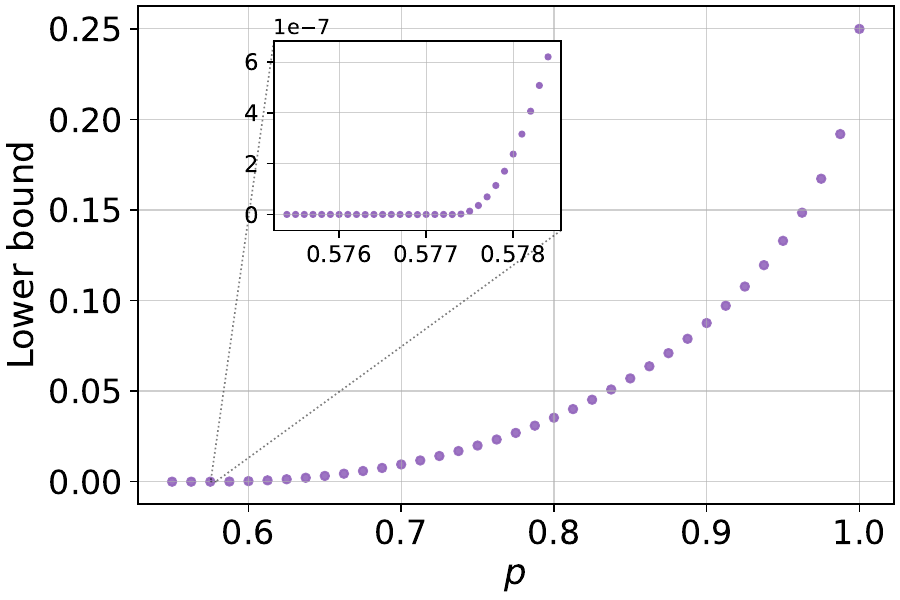}
\caption{(Left) Lower bounds for mixed states from the second level of 
the hierarchy (see Eq.~\eqref{eq-mixed-bound} in the main text) for GHZ states mixed 
with white noise, $\varrho(p) = p \ketbra{\mathrm{GHZ}} + (1-p) \openone/8$.  
(Middle) The estimate for W states with white noise,
$\varrho(p) = p \ketbra{W} + (1-p) \openone/8$.
These states are known to be fully separable for 
$p \leq 0.177$, entangled for $p>\sqrt{3}/(8+\sqrt{3}) \approx 0.1778$,
but still separable for any bipartition until $p \leq 0.2095$. The inset shows the solution of the SDP for the W state with higher precision, i.e., 
smaller duality gap, in the parameter regime close to the border of 
separability. (Right) Lower bounds for mixed states from the second level of 
the hierarchy (see Eq.~\eqref{eq-mixed-bound} in the main text) for Tao states mixed 
with white noise, $\varrho(p) = p \tau + (1-p) \openone/8$. The inset shows the solution of the SDP for the Tao state with higher precision, i.e., 
smaller duality gap, in the parameter regime close to the border of 
separability.
}
\label{fig:mixed}
\end{figure}

In order to test the estimation of the geometric measure for mixed states via 
Eq.~\eqref{eq-mixed-bound} in the main text we implemented the SDP using the solver Mosek. For three qubits, the SDP requires a computation time of about thirty seconds on a standard laptop. 

The results for the two examples discussed in the main text
are displayed in Fig.~\ref{fig:mixed}. On the one hand, one finds that the SDP 
for the first level underestimates the amount of entanglement for highly entangled 
pure states. On the other hand, it provides a simple and very strong entanglement 
test for weakly entangled states. This is in line with the observations made in 
the main text as well as in Appendix~\ref{app:trees}, where it was found that the 
hierarchies give good bounds for weakly entangled states. 

While the noisy GHZ and W states have frequently been discussed, the three-qubit
Tao state $\tau$ has been discovered  only recently \cite{ohst2024certifying}. It is 
given by
\begin{equation}
\tau = \frac{1}{4}
\begin{pmatrix}
a & 0 & 0 & 0 & 0 & 0 & 0 & -c
\\
0 & b & 0 & 0 & 0 & 0 & ic & 0
\\
0 & 0 & b & 0 & 0 & ic & 0 & 0
\\
0 & 0 & 0 & a & -c & 0 & 0 & 0
\\
0 & 0 & 0 & -c & b & 0 & 0 & 0
\\
0 & 0 & -ic & 0 & 0 & a & 0 & 0
\\
0 & -ic & 0 & 0 & 0 & 0 & a & 0
\\
-c & 0 & 0 & 0 & 0 & 0 & 0 & b
\end{pmatrix},
\end{equation}
with $a=\cos^2(\vartheta),$
$b=\sin^2(\vartheta),$
$c=\cos(\vartheta)\sin(\vartheta)$
and $\vartheta=\arccos(\sqrt{1/2+\sqrt{1/12}}).$ These states are biseparable 
for any fixed partition, an explicit decomposition is given in 
Ref.~\cite{ohst2024certifying}. On the other hand, the state is not fully separable
and hence entangled, actually the noisy state $\varrho(p) = p \tau + (1-p)\openone/8$
is provably entangled as long as $p \geq 0.5784$ and provably separable for 
$p \leq 0.5754.$ This is a large
noise robustness; actually, the state $\tau$ was found in Ref.~\cite{ohst2024certifying}
by an iteration of semidefinite programs looking for biseparable states with the highest
robustness of entanglement. Using the first level of the hierarchy, one directly finds
that the states $\varrho(p)$ are entangled for $p \geq 0.5775$, demonstrating the usefulness
of our approach to detect weakly entangled states.

\bibliography{references}

\end{document}